\newtheorem{theorem}{Theorem}
\newtheorem{lemma}{Lemma}
\newtheorem{prop}{Proposition}
\newtheorem{rem}{Remark}
\newcommand{\cS}{\mathcal{S}}
\newcommand*{\rom}[1]{\expandafter\@slowromancap\romannumeral #1@}
\newcommand{\rinc}[2]{f^{#1}_{#2}}
\newcommand{\inc}[2]{\bF^{#1}_{#2}}
\newcommand{\pc}[2]{\eta_{#1 #2}}
\newcommand{\cut}[1]{g_{#1}}
\newcommand{\PQ}[2]{\bP_{#1, #2}}
\newcommand{\pq}[1]{\bP_{#1}}
\renewcommand{\SS}{\mathbb{S}}
\newcommand{\PP}[1]{\mathsf{P}_{(#1)}}
\def\BibTeX{{\rm B\kern-.05em{\sc i\kern-.025em b}\kern-.08em
    T\kern-.1667em\lower.7ex\hbox{E}\kern-.125emX}}
\newcommand{\N}{n}
\renewcommand{\det}[1]{\mathrm{det}\left(#1\right)}
\newcommand{\bD}{\mathbf{D}}
\newcommand{\bP}{\mathbf{P}}
\newcommand{\bF}{\mathbf{F}}
\newcommand{\bM}{\mathbf{M}}
\newcommand{\bQ}{\mathbf{Q}}
\newcommand{\rnk}{\text{rank}}
\newcommand{\sgn}{\text{sign}}
\begin{document}

\title{When an Energy-Efficient Scheduling is Optimal for Half-Duplex Relay Networks?}
\author{
\IEEEauthorblockN{Sarthak Jain, Martina Cardone,  Soheil Mohajer}
University of Minnesota, Minneapolis, \!MN 55455, \!USA,
\!Email: \{jain0122, mcardone, soheil\}@umn.edu\\
\vspace{-0.9em}
\thanks{This research was supported in part by NSF under Award \#1907785.
}
}

\maketitle

\begin{abstract}
This paper considers a diamond network with $\N$ interconnected relays, namely a network where a source communicates with a destination by hopping information through $\N$ communicating/interconnected relays. Specifically, the main focus of the paper is
on characterizing sufficient conditions under which the $\N+1$ states (out of the $2^{\N}$ possible ones) in which at most one relay is transmitting  suffice to characterize the approximate capacity, that is the Shannon capacity up to an additive gap that only depends on $\N$. Furthermore, under these sufficient conditions, closed form expressions for the approximate capacity and scheduling (that is, the fraction of time each relay should receive and transmit) are provided. A similar result is presented for the dual case, where in each state at most one relay is in receive mode. 

\end{abstract}

\section{Introduction}

Computing the Shannon capacity of a wireless relay network is an open problem. In a half-duplex $\N$-relay network, each relay can either transmit or receive at a given time instant and therefore a \textit{scheduling} question arises: What fraction of time each relay in the network should be scheduled to receive/transmit information so that rates close to the Shannon capacity of the network can be achieved? 

For an $\N$-relay half-duplex network, there are $2^\N$ possible receive/transmit configuration states, because each relay can either be scheduled for reception or transmission. 
However, in~\cite{CardoneTIT2016}, it has been surprisingly shown that only $\N+1$ out of these $2^\N$ possible states are sufficient to achieve the network {\em approximate capacity}, i.e., an additive gap approximation of the Shannon capacity, where the gap is only a function of $\N$.
This result opens novel research directions, such as characterizing {a set} of $\N+1$ {\em critical} states for each network efficiently (in polynomial time in $\N$). 

In this work, we investigate the question above in the context of diamond networks with $\N$ interconnected relays, where the source communicates with the destination by hopping information through $\N$ half-duplex relays that can communicate with each other. 
In particular, we analyze the linear deterministic approximation of the Gaussian noise channel, and characterize sufficient {conditions
under which} at most one relay is required to transmit at any given  time {to achieve the approximate capacity.}  This leads to a significant reduction in the {average power consumption at the relays, compared to a random network with identical $n$ (where potentially at each point in time more than one relay is transmitting)
and hence,} the proposed scheduling is energy-efficient. The other advantage of a schedule with at most one relay in transmit mode is that {it simplifies} the  synchronization problem at the destination.
Our result can be readily translated to {obtain} sufficient conditions under which operating the network only in states with at most one relay in receive mode is sufficient to achieve the approximate capacity. {The proposed scheduling and approximate capacity} can be translated to obtain similar results for the practically relevant Gaussian noise channel.

To the best of our knowledge, this is the first work that provides network conditions that suffice to characterize {a set} of $\N+1$ critical states for arbitrary values of $\N$ in relay networks where, in addition to broadcasting and signal superposition, we also have communicating/interconnected relays.

\smallskip
\noindent {\bf{Related Work.}}
The cut-set bound has been shown to offer a constant (i.e., which only depends on $\N$) additive gap approximation of the Shannon capacity for Gaussian relay networks~\cite{AvestimehrIT2011,OzgurIT2013,LimIT2011,LimISIT2014,CardoneIT2014}. 
Such an approximation, for an $\N$-relay Gaussian half-duplex network can be computed by solving a linear program involving $2^\N$ cut constraints and $2^\N$ variables corresponding to the receive/transmit configurations of the $\N$ half-duplex relays.
{However, it has been shown} that it suffices to operate the network in only $\N+1$ states out of the $2^\N$ possible ones to achieve the approximate capacity~\cite{CardoneTIT2016}. Finding such a set of $\N+1$  critical states in polynomial time in $\N$ for half-duplex Gaussian relay networks is an open problem. These critical states and the approximate capacity can be computed in polynomial time {for the} following networks: (i) $\N=2$ relay half-duplex diamond networks with {\em non-interconnected} relays~\cite{bagheri2014} and {\em interconnected} relays~\cite{JainITW2021}; (ii) line networks~\cite{EzzeldinISIT2017}; (iii) a special class of layered networks~\cite{EtkindTIT2014}; and (iv) { diamond networks with $\N$ non-interconnected} relays under certain network conditions expressed in~\cite{JainISIT2019}. 
We highlight that the result presented in this paper subsumes the result for diamond networks studied in~\cite{JainISIT2019} (with no interconnection among the relays) and our recent result in~\cite{JainITW2021} for $\N=2$.

\smallskip
\noindent {\bf{Paper Organization.}} Section~\ref{sec:SystemModel} introduces the notation, describes the Gaussian and the linear deterministic half-duplex {diamond network with $n$ interconnected relays} and summarizes known capacity results. 
Section~\ref{sec:AppCapSched} presents the main result of the paper, the proof of which is in Section~\ref{sec:ProofTheorem}.
Specifically, Section~\ref{sec:AppCapSched} {characterizes sufficient} conditions under which the set of (at most) $\N+1$ network states in which at most one relay is { transmitting  (and the set with at most $n+1$ states with at most one relay receiving)} suffice to characterize the approximate capacity of the binary-valued linear deterministic approximation of the Gaussian noise channel.
{Some of the proofs can be found in the appendix.}

\section{Notation and System Model}
\label{sec:SystemModel}
\noindent \emph{Notation:} We denote the set of integers $\{i, \ldots, m\}$ by $[i:m]$, and $\{1,\ldots,m\}$ by $[m]$; note that $[i:m]=\varnothing$ if $i >m$.
For a variable $\theta$ and a set $\mathcal{X}$, $\theta_{\mathcal{X}}=\{\theta_x : x \in \mathcal{X}\}$.
We use boldface letters to refer to matrices. 
For a matrix $\bM$, $\det{\bM}$ is the determinant of $\bM$, $\bM^T$ is the matrix transpose of $\bM$ and ${\bM}_{\mathcal{A},\mathcal{B}}$ is the submatrix of  $\bM$ obtained by retaining all the rows  indexed by  the set $\mathcal{A}$ and all the columns indexed by the set $\mathcal{B}$. Matrix columns and rows are indexed beginning from $0$ (instead of $1$). $\lfloor \cdot\rfloor$  and $\lceil \cdot \rceil$ are the floor and ceiling operations, respectively, and $\left [a\right ]^+ = \max\{a,0\}$. $\mathbf{0}_{p \times q}$ is the zero matrix of dimension $p \times q$; $\mathbf{I}_p$ is the $p \times p$ identity matrix.

 The Gaussian half-duplex diamond {network with $n$ interconnected relays} consists of a source (node $s$) that wishes to communicate with a destination (node $d$) through $\N$ {\em interconnected} 
relays.
At each time instant $t$, the input/output relationship of 
this network is described as

\begin{align}
\label{eq:Diam}
\begin{split}
Y_{d}(t)&\!=\!\sum_{i=1}^{\N} S_{i}(t) h_{di} X_{i}(t) + Z_{d}(t),
\\Y_{i}(t)&\!=\!(1\!-\!S_{i}\hspace{-1pt}(t))\! \Big (\!h_{is} X_{s}\hspace{-1pt}(t)\!+\hspace{-5pt} \sum_{j \in [\N]} \!\!\! S_{j}(t) h_{ij} X_{j}\hspace{-1pt}(t)\!+\!Z_{i}\hspace{-1pt}(t) \!\Big ),  
\end{split}
\end{align}
for $i\in [\N]$. 
Note {that, at each time instant $t$:} 
(i) $S_{i}(t)$ is a binary random variable that indicates the state of relay $i \in [\N]$, with {$S_{i}(t)=0$} (respectively, {$S_i(t)=1$}) indicating that relay $i$ is receiving (respectively, transmitting); 
(ii) $X_{i}(t)$ is the channel input at node $i$ that satisfies the unit average power constraint $\mathbb{E}[|X_{i}(t)|^2] \leq 1$ for  $i \in \{s\}\cup [\N]$; 
(iii) $h_{ij}$ with $i \in [\N] \cup \{d\}$ and $j \in \{s\} \cup [\N]$ is the {\em time-invariant }
complex channel gain from node $j$ to node $i$; note that $h_{ds}=0$ and, {since the relays operate in half-duplex mode, without loss of generality we let $h_{ii}=0$;}
(iv) $Z_{i}(t)\sim \mathcal{CN}(0,1)$ is the complex additive white Gaussian noise at node $i\in\{d\} \cup [\N]$; 
and finally, (v) $Y_{i}(t)$ is the received signal at node $i\in\{d\} \cup [\N]$.

\begin{figure}[t] 
\includegraphics[width=0.7\columnwidth]{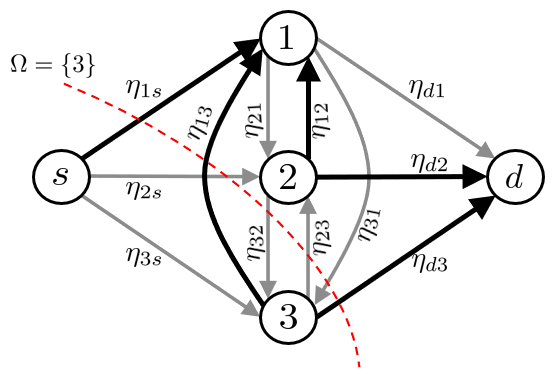}
\centering
\caption{{Diamond network with $n=3$ interconnected relays} (with cut $\Omega=\{3\}$ and state $\mathcal{S}=\{2,3\}$).}
\label{systemmodel}
\vspace{-5mm}
\end{figure}

The Shannon capacity $\mathsf{C}^{\text{G}}$ of the network  in~\eqref{eq:Diam} is not known for general $\N$. However, the capacity can be approximated within a constant {$O(n)$} bit gap.  More precisely, we can
focus on the binary {\em linear deterministic} approximation of the Gaussian noise network model~\cite{AvestimehrIT2011}, {for which the approximate capacity is known and provides}
an approximation for $\mathsf{C}^{\text{G}}$.
The linear deterministic model (a.k.a. ADT model~\cite{AvestimehrIT2011}) corresponding to the Gaussian noise network in~\eqref{eq:Diam} has an input-output relationship given by
\begin{flalign}
\label{eq:DiamLDA}
\begin{split}
& Y_{d}(t)\!=\!\sum_{i=1}^{\N} S_{i}(t) {\bD}^{\eta-\pc{d}{i}} X_{i}(t),
\\& Y_{i}(t)\!\hspace{-1pt}=\hspace{-1pt}\!(1\!-\!S_{i}(t)\hspace{-1pt}) \!\Big (\!{\bD}^{\eta-\pc{i}{s}}\hspace{-2pt} X_{s}(t)\!+\!\!\!\hspace{-2pt} \sum_{j \in [\N]} \!\!\! S_{j}(t) {\bD}^{\eta-\pc{i}{j}}\hspace{-2pt} X_{j}(t) \!\Big)\hspace{-1pt}, 
\end{split}
\end{flalign}
for $i\in [\N]$, where 
\begin{flalign*}
 {\bD}^{\eta-m} =\left[ 
\begin{array}{c|c} 
  {\mathbf{0}}_{(\eta-m)\times m} & {\mathbf{0}}_{(\eta-m)\times (\eta-m)} \\
  \hline 
  {\mathbf{I}}_{m } & {\mathbf{0}}_{m \times (\eta - m)}
\end{array} 
\right],
\end{flalign*}
and

\begin{align*}
\pc{i}{j} = \left \lceil \log |h_{ij}|^2 \right  \rceil^+, \ i \in [\N] \cup \{d\}, j \in \{s\} \cup [\N], i \neq j.
\end{align*}
Here, the vectors $X_s(t)$, $X_i(t)$, $Y_d(t)$, and $Y_i(t)$ with $i \in [\N]$ are binary of length $\eta=\max \pc{i}{j}$, where {the maximization} is taken over all {channels $\pc{i}{j}$'s} in the network; ${\bD}$ is the so-called $\eta \times \eta$ shift matrix, and $S_i(t), i \in [\N]$ is the $i$th relay binary-valued state random variable.

The approximate capacity of the linear deterministic model in~\eqref{eq:DiamLDA}  is given by the solution of 

\begin{align} \label{maxmincut}
\begin{split}
    \mathsf{C}^{\text{LD}}= \max_{\bm{\lambda}}  \ & t  \\
    \text{s.t.\  }     & t \leq 
    \cut{\Omega} \triangleq \sum_{\mathcal{S} \subseteq [\N]} \lambda_{\mathcal{S}} \rinc{\Omega}{\mathcal{S}},  \qquad \forall \Omega \subseteq [\N], \\
      &\cut{p} \triangleq \sum_{\mathcal{S}\subseteq{[\N]}} \lambda_{\mathcal{S}} \leq 1,\\ &\lambda_{\mathcal{S}} \geq 0, \qquad \forall  \mathcal{S} \subseteq[\N],
\end{split}
\end{align}
{where:} (i) $\mathcal{S} =\{i\in [\N] : S_i=1\}$ is the set of relay nodes in transmit mode; 
(ii) $\lambda_{\mathcal{S}} \geq 0$ is the fraction of time that the network operates in state $\mathcal{S}$ and hence, $\sum_{\mathcal{S} \subseteq [\N]} \lambda_{\mathcal{S}} {\leq} 1$;
(iii) $\bm{\lambda}$ is referred to as a network {\em schedule} and is a vector obtained by stacking together $\lambda_{\mathcal{S}}$ for all  $\mathcal{S} \subseteq [\N]$;
(iv) $\Omega \subseteq [\N]$ denotes a partition of the relays in the `side of $s$', i.e., $\{s\} \cup \Omega$ is a network cut; 
similarly, $\Omega^c = [\N] \setminus \Omega$  is a partition of the relays in the `side of~$d$'. Moreover, we define 
\begin{align}\label{eq:def:pc}
    \rinc{\Omega}{\mathcal{S}} \!\triangleq\!  I \left (X_s, X_{\Omega \cap \mathcal{S}}; Y_d, Y_{\Omega^c \cap \mathcal{S}^c}|X_{\Omega^c \cap \mathcal{S}},\mathcal{S} \right ) \!=\! \text{rank} \left (\inc{\Omega}{\mathcal{S}} \right ), 
\end{align}
where  
$\inc{\Omega}{\mathcal{S}}$ {is the transfer matrix} from $X_{\{s\} \cup (\Omega \cap \cS)}$ to $Y_{\{d\} \cup (\Omega^c \cap \cS^c)}$, corresponding to the ADT model~\cite{AvestimehrIT2011}.

It turns out that  $|\mathsf{C}^{\text{G}} - \mathsf{C}^{\text{LD}}| \leq \kappa$, where $\kappa = O(n)$ is independent of the channel gains and operating SNR and hence, $\mathsf{C}^{\text{LD}}$ {in~\eqref{maxmincut}} provides an approximation for the Shannon capacity of the network in~\eqref{eq:Diam}\footnote{We highlight that schemes such as quantize-map-and-forward~\cite{AvestimehrIT2011} and noisy network coding~\cite{LimIT2011}, together with the cut set bound, allow to characterize the capacity of Gaussian relay networks up to a constant additive gap.}.

\smallskip
\noindent{\bf{Example 1.}} Consider the diamond network with $\N=3$ {interconnected} relays in Fig.~\ref{systemmodel}. For {the cut} $\Omega=\{3\}$ and state $\mathcal{S}=\{2,3\}$, we have $\{s\}\cup (\Omega \cap \cS) = \{s,3\}$ and $\{d\}\cup (\Omega^c \cap \cS^c) = \{d,1\}$. The input-output relationship for this cut and state is given by

\begin{align*}
\begin{bmatrix} 
Y_d\\
Y_1
\end{bmatrix} = \begin{bmatrix}
{\bD}^{\eta} & {\bD}^{\eta - \pc{d}{3}}
\\
{\bD}^{\eta - \pc{1}{s}} & {\bD}^{\eta - \pc{1}{3}}
\end{bmatrix}
\begin{bmatrix} 
X_s\\
X_3
\end{bmatrix}.
\end{align*}
Therefore, we have 
\begin{align*}
\qquad \rinc{\{3\}}{\{2,3\}} = \mathrm{rank}\left(\inc{\{3\}}{\{2,3\}}\right) = \mathrm{rank}
\begin{bmatrix}
{\bD}^{\eta} & {\bD}^{\eta - \pc{d}{3}}
\\
{\bD}^{\eta - \pc{1}{s}} & {\bD}^{\eta - \pc{1}{3}}
\end{bmatrix}.\quad \diamond 
\end{align*}
In this work, we seek to identify {\em sufficient} network conditions which allow {to 
determine} {a set} of {$\N+1$ states} (out of the $2^\N$ possible ones) that suffice to achieve the {approximate} capacity in~\eqref{maxmincut} of the linear deterministic network {and can} be readily translated {into} a similar result for  the original noisy Gaussian channel model in~\eqref{eq:Diam}.

\section{Main Result: Conditions for Optimality of States with at Most One Relay Transmitting}
\label{sec:AppCapSched}

Without loss of generality, we assume {that} the relay nodes are arranged in increasing order of their left link capacities, that is, $\pc{1}{s} \leq \pc{2}{s} \leq \dots \leq \pc{\N}{s}$. We define $\pq{}$ to be an $(\N+2) \times (\N+2)$ matrix, {the rows and columns of which are indexed by $[0:n+1]$,} 
and 
\begin{equation}
\label{eq:MatrP}
    \PQ{i}{j}=  
\begin{cases}
-\rinc{[i:\N]}{\{j\}}, \quad &(i,j) \in [\N+1]^2,\\
0, 
\quad &(i,j)=(0,0),\\
1, \quad &\text{otherwise,}
\end{cases}
\end{equation}
where we define  $\rinc{\Omega}{\{\N+1\}}=\rinc{\Omega}{\varnothing}$, for consistency. Moreover, for $i\in [0:\N+1]$ we use $\PP{i}$ to denote the \emph{minor} of $\bP$ associated with the row $0$ and column $i$ {of the} matrix $\bP$, that is 
\[
\PP{i} \triangleq \det{\PQ{{[\N+1]}}{[0:\N+1]\setminus \{i\}} }. 
\]
Finally, we define 
\[
 \SS\triangleq \{\{1\}, \{2\}, \dots, \{\N\}, \varnothing \},
\]
to be the set of the $\N+1$ states, where at most one relay is transmitting in each state.

The main result of this paper is presented in Theorem~\ref{thm1}, which characterizes sufficient network conditions for the optimality of operating the network only in states $\cS\in \SS$.
\begin{theorem} \label{thm1}
{Whenever} ${\det{\bP} \neq 0}$ and ${\frac{(-1)^{\N+1} \PP{\N+1}}{ \det{\bP}} \geq 0}$, then it is optimal to operate the network in 
states in $\SS$ to achieve $\mathsf{C}^{\text{LD}}$ in~\eqref{maxmincut}.
\end{theorem}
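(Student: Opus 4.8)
The plan is to prove the theorem by exhibiting a feasible schedule $\bm{\lambda}$ of the linear program \eqref{maxmincut} that is supported only on the states in $\SS$, together with a dual certificate of the same value, so that weak duality forces both to be optimal for the full (unrestricted) program. Since restricting \eqref{maxmincut} to schedules supported on $\SS$ can only lower the optimum, it suffices to produce such a matched primal/dual pair. I would first write the dual of \eqref{maxmincut}: assigning $u_\Omega\ge 0$ to each cut constraint and $w\ge 0$ to the normalization $\cut{p}\le 1$, the dual reads $\min w$ subject to $\sum_{\Omega\subseteq[\N]}u_\Omega=1$, $u_\Omega\ge 0$, and $w\ge \sum_{\Omega} u_\Omega\,\rinc{\Omega}{\mathcal S}$ for every $\mathcal S\subseteq[\N]$.

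The candidate primal solution is obtained by declaring the $\N+1$ nested cuts $\Omega=[i:\N]$, $i\in[\N+1]$ (with $[\N+1:\N]=\varnothing$), together with the normalization, to be the active constraints. Writing $\lambda_j$ for $\lambda_{\{j\}}$ (and $\lambda_{\N+1}=\lambda_\varnothing$) and collecting $\mathbf{x}=[t,\lambda_1,\dots,\lambda_{\N+1}]^T$, the requirements $t=\sum_j \lambda_j\,\rinc{[i:\N]}{\{j\}}$ for every $i$ and $\sum_j\lambda_j=1$ are exactly the linear system $\bP\,\mathbf{x}=\mathbf{e}_0$ with $\mathbf{e}_0=[1,0,\dots,0]^T$; this is precisely where the border structure of $\bP$ in \eqref{eq:MatrP} originates. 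Since $\det{\bP}\neq 0$, Cramer's rule yields the closed forms $t=\PP{0}/\det{\bP}$ and $\lambda_j=(-1)^{j}\,\PP{j}/\det{\bP}$; in particular $\lambda_\varnothing=(-1)^{\N+1}\PP{\N+1}/\det{\bP}$, which the second hypothesis makes nonnegative.

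Symmetrically, the candidate dual solution places all its mass on the singleton states, declaring $w=\sum_i u_i\,\rinc{[i:\N]}{\{j\}}$ for each $j$ and $\sum_i u_i=1$ to be active. Because $\rinc{[i:\N]}{\{j\}}=-\PQ{i}{j}$ and the remaining entries of $\bP$ are ones and a single zero, this system is precisely $\bP^{T}\mathbf{y}=\mathbf{e}_0$ with $\mathbf{y}=[w,u_1,\dots,u_{\N+1}]^T$. Hence $w=\PP{0}/\det{\bP^{T}}=\PP{0}/\det{\bP}=t$, so the two candidate objectives coincide automatically, and no appeal to the duality theorem is needed to match values, only to feasibility.

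What remains, and what I expect to be the crux, is feasibility of the two candidates. The determinant hypotheses supply directly only $\det{\bP}\neq 0$ (nondegeneracy, so the Cramer expressions are well defined) and $\lambda_\varnothing\ge 0$; one must still establish (i) $\lambda_j\ge 0$ for $j\in[\N]$ and $u_i\ge 0$ for all $i$, and (ii) the omitted inequalities, namely $t\le \sum_j \lambda_j\,\rinc{\Omega}{\{j\}}$ for every non-nested cut $\Omega$ on the primal side and $w\ge\sum_i u_i\,\rinc{[i:\N]}{\mathcal S}$ for every non-singleton state $\mathcal S$ on the dual side. Item (ii) ranges over exponentially many constraints, and I would reduce it to structural properties of the rank function $\rinc{\Omega}{\mathcal S}=\rnk(\inc{\Omega}{\mathcal S})$: submodularity of $\Omega\mapsto\rinc{\Omega}{\mathcal S}$, so that the nested cuts are the binding ones and any other cut only relaxes the primal inequality, and the extremality of singleton states for $\mathcal S\mapsto\rinc{\Omega}{\mathcal S}$ under the left-capacity ordering $\pc{1}{s}\le\dots\le\pc{\N}{s}$. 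Proving these monotonicity and submodularity inequalities for the ADT transfer matrices, and using them to fix the signs in (i), is the technically demanding step; once it is in place, weak duality applied to the two feasible candidates of equal value $t=w$ sandwiches the optimum of \eqref{maxmincut}, and since this optimum is attained by a schedule supported on $\SS$, operating the network only in the states of $\SS$ is optimal.
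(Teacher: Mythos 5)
Your skeleton is essentially the paper's proof in different clothing: the paper relaxes \eqref{maxmincut} to the $\N+1$ nested-cut constraints and runs KKT, and its KKT multipliers $\bm{\mu}$ in~\eqref{eq:mu} solve exactly your transposed system $\bP^{T}\mathbf{y}=\mathbf{e}_0$, so your primal/dual certificate pair and the paper's (Propositions~\ref{prop:OptUB:1}--\ref{prop:FeasiAC}) are the same objects; your value-matching via Cramer's rule on $\bP$ and $\bP^{T}$ is correct, and your plan to dispose of the exponentially many omitted constraints via submodularity of $\rinc{\Omega}{\cS}$ in $\Omega$ (primal side) and in $\cS$ (dual side) is precisely Proposition~\ref{submodularity} as used in Proposition~\ref{prop:FeasiAC} and in the dual-feasibility step of Proposition~\ref{prop:OptUB}.

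The genuine gap is at what you yourself flag as the crux, item (i). You propose to fix the signs of $\lambda_{\{j\}}$ and $u_i$ by ``monotonicity and submodularity inequalities for the ADT transfer matrices,'' but that toolkit cannot do it: nonnegativity of the Cramer solutions is not a structural consequence of the rank function alone. It hinges on invoking the hypothesis $\det{\bP}\neq 0$ a \emph{second} time, in a role unrelated to the well-definedness of Cramer's rule. Concretely, the paper assumes some $\lambda_{\{k\}}^\star<0$, uses the monotonicity inequality of Lemma~\ref{prop:extra} to propagate nonpositivity backwards down to the last index $\ell$ at which consecutive left capacities tie, and then shows the resulting chain of inequalities is consistent only if $\pc{\ell}{s}=\cdots=\pc{k}{s}$ and $\rinc{[\ell+1:\N]}{\{\ell+1\}}=\pc{\ell}{s}$; Lemma~\ref{singcond} then shows these degeneracies force two columns of $\bP$ to coincide, i.e., $\det{\bP}=0$, a contradiction. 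The dual side needs a similarly delicate two-stage argument (Proposition~\ref{prop:mu}): first $\mu_j=0$ for every $j$ up to the last tie among the $\pc{i}{s}$'s --- established through a rank argument on a derived matrix that again uses nonsingularity of $\bP$ --- and only then a sign-propagation induction, which a naive induction would botch exactly when ties occur. These arguments constitute both lemmas and both appendices of the paper; without them your proposal is a correct scaffold with the load-bearing step missing.
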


\noindent
{\bf{Example 2.}}
Consider the relay network in Fig.~\ref{systemmodel} with link capacities given by ${\pc{1}{s}=1},$ ${\pc{2}{s}=3},$ ${\pc{3}{s}=5},$ ${\pc{d}{1}=6},$ ${\pc{d}{2}=5},$ ${\pc{d}{3}=3},$ ${\pc{1}{2}=3},$ ${\pc{2}{1}=4},$ ${\pc{3}{2}=5},$ ${\pc{2}{3}=3},$ ${\pc{3}{1}=2}$ and ${\pc{1}{3}=4}$.
For this network, the matrix $\bP$ is given~by
\begin{equation}\label{example}
    \bP=
\begin{bmatrix}
0 & 1 & 1 & 1 & 1\\
1 & -6 & -5 & -3 & 0\\
1 & 0 & -6 & -4 & -1 \\
1 & -3 & -1 & -7 & -3 \\
1 & -5 & -5 & -3 & -5
\end{bmatrix}.
\end{equation}
From this one can  verify that ${\det{\bP} =280 \neq 0}$ and ${ \frac{(-1)^4 \PP{4}}{\det{\bP}} = \frac{\det{\PQ{[4]}{[0:3]}}}{\det{\bP}} =\frac{8}{280} \geq 0}$, i.e., the conditions in Theorem~\ref{thm1} are satisfied for this $n=3$ relay network.
Thus, operating this network in ${\SS=\{\{1\}, \{2\}, \{3\}, \varnothing\}}$ achieves $\mathsf{C}^{\text{LD}}$ in~\eqref{maxmincut}.\hfill $\diamond$

\begin{rem}
Note that $\SS$ consists of all the states where at most one relay is transmitting, while the rest of the relays are receiving. A similar  condition can be obtained for the optimality of the states 
\[
\SS'=\{[\N], [\N]\setminus \{1\}, [\N]\setminus \{2\}, \dots, [\N]\setminus \{\N\}\},
\] 
where at most one relay is in receive mode. 

\end{rem}

\begin{rem}
For the case when the relays are non-interconnected, i.e., {$\pc{i}{j}=0$ for all $(i,j) \in [\N]^2$,} the result in Theorem~\ref{thm1} subsumes the result in~\cite{JainISIT2019}. Moreover, for $\N=2$, the result in Theorem~\ref{thm1} is equivalent to the one in~\cite{JainITW2021}, where we characterized the set of at most $3$ states that suffice to achieve $\mathsf{C}^{\text{LD}}$ in~\eqref{maxmincut} for $\N=2$.
\end{rem}

\begin{rem}
{The conditions in Theorem~\ref{thm1} are a consequence of the relaying scheme used to operate the network in states $\SS$. This scheme is based on information flow preservation at each relay, i.e., the amount of unique linearly independent bits that each relay decodes is equal to the amount of unique linearly independent bits that each relay transmits. The conditions in Theorem~\ref{thm1} ensure the feasibility of this scheme.}
\end{rem}

In the remainder of this section, we analyze {the} variables $\rinc{\Omega}{\cS}$ and present some of {their} properties, which play an important role in the proof of Theorem~\ref{thm1}, presented in Section~\ref{sec:ProofTheorem}.

\subsection{Properties of $\rinc{\Omega}{\mathcal{S}}$}
We here present two properties of $\rinc{\Omega}{\mathcal{S}}=\text{rank}(\inc{\Omega}{\mathcal{S}})$ that we will leverage in the proof of Theorem~\ref{thm1}.
\begin{prop}
\label{prop:Propfs}
For all $\Omega \subseteq [\N]$ and $\mathcal{S}\subseteq [\N]$, we have that
    \begin{equation} \label{ranks}
        \rinc{\Omega}{\mathcal{S}} \geq \max_{i \in \Omega^c \cap \mathcal{S}^c} \pc{i}{s} +\max_{j \in \Omega \cap \mathcal{S}} \pc{d}{j},
    \end{equation}
with equality if $\Omega^c \cap \mathcal{S}^c = \varnothing$ or $\Omega \cap \mathcal{S} = \varnothing$.
\end{prop}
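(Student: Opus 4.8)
The plan is to work directly with the explicit block structure of the transfer matrix $\inc{\Omega}{\cS}$, whose block rows are indexed by the outputs $\{d\}\cup(\Omega^c\cap\cS^c)$, whose block columns are indexed by the inputs $\{s\}\cup(\Omega\cap\cS)$, and whose $(i,j)$ block is the $\eta\times\eta$ matrix $\bD^{\eta-\pc{i}{j}}$. The single structural fact I would rely on is that each such block has its only nonzero entries in an identity block $\mathbf{I}_{\pc{i}{j}}$ occupying its bottom $\pc{i}{j}$ rows and leftmost $\pc{i}{j}$ columns, so that $\rnk(\bD^{\eta-\pc{i}{j}})=\pc{i}{j}$, with the lone exception $\inc{d}{s}=\bD^{\eta}=\mathbf{0}$ coming from $h_{ds}=0$; this vanishing block is what drives the whole argument.

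For the inequality, I would first assume that both $\Omega^c\cap\cS^c$ and $\Omega\cap\cS$ are nonempty and pick indices $i^\star$ and $j^\star$ attaining the two maxima. I would then carve out the square submatrix of $\inc{\Omega}{\cS}$ given by the bottom $\pc{i^\star}{s}$ rows of block-row $i^\star$ together with the bottom $\pc{d}{j^\star}$ rows of block-row $d$, against the leftmost $\pc{i^\star}{s}$ columns of block-column $s$ together with the leftmost $\pc{d}{j^\star}$ columns of block-column $j^\star$. By the identity-corner structure the $(i^\star,s)$ and $(d,j^\star)$ sub-blocks become $\mathbf{I}_{\pc{i^\star}{s}}$ and $\mathbf{I}_{\pc{d}{j^\star}}$, the $(d,s)$ sub-block is zero, and the $(i^\star,j^\star)$ sub-block is irrelevant; ordering the rows as $(i^\star,d)$ and the columns as $(s,j^\star)$ exhibits a block upper-triangular matrix with identity diagonal blocks, hence of full rank $\pc{i^\star}{s}+\pc{d}{j^\star}$. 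Since the rank of a matrix is at least the rank of any submatrix, the claimed bound follows, reading the maximum over the empty set as $0$.

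For the equality, when $\Omega^c\cap\cS^c=\varnothing$ the matrix $\inc{\Omega}{\cS}$ degenerates to a single block-row, namely the horizontal concatenation of $\bD^{\eta}=\mathbf{0}$ with the blocks $\bD^{\eta-\pc{d}{j}}$ over $j\in\Omega\cap\cS$. Counting nonzero rows shows that exactly the bottom $\max_{j\in\Omega\cap\cS}\pc{d}{j}$ rows survive, and these are linearly independent because each carries a pivot in a column that no other surviving row occupies; hence $\rinc{\Omega}{\cS}=\max_{j\in\Omega\cap\cS}\pc{d}{j}$, matching the lower bound. The case $\Omega\cap\cS=\varnothing$ is the transpose of this argument, with the roles of rows and columns, and of $s$ and $d$, interchanged.

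The step I expect to be the main obstacle is the careful bookkeeping in the submatrix extraction: verifying that the selected rows and columns genuinely reduce the two relevant sub-blocks to identities, rather than to arbitrary partial shifts, and that the resulting matrix is honestly block-triangular, so that its rank is exactly the sum of the two maxima. Fixing the empty-maximum convention $\max_{\varnothing}=0$ is the remaining detail needed to make the boundary cases of the equality statement airtight.
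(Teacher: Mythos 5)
Your proposal is correct and follows essentially the same route as the paper: both isolate the block rows $\{d,i^\star\}$ and block columns $\{s,j^\star\}$, exploit the vanishing $\bD^{\eta-\pc{d}{s}}=\mathbf{0}$ block to lower-bound the rank by $\pc{i^\star}{s}+\pc{d}{j^\star}$, and obtain equality in the degenerate cases by observing that $\inc{\Omega}{\cS}$ collapses to a single block row (resp.\ block column). The only difference is that you spell out the rank bound explicitly by carving an invertible block-triangular submatrix with identity diagonal blocks, a verification the paper asserts without detail.
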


\begin{proof}
Let
\[i^\star = \arg\max_{i\in \Omega^c \cap \cS^c} \pc{i}{s},\quad \textrm{and}  \quad  j^\star = \arg\max_{j\in \Omega \cap \cS} \pc{d}{j}.
\]
Then, the submatrix of $\inc{\Omega}{\mathcal{S}}$ induced by row blocks $\{d,i^\star\}$ and column blocks $\{s,j^\star\}$ is 
\begin{align*}
    \left[ 
\begin{array}{c|c} 
  \mathbf{D}^{\eta-\pc{d}{s}} & \mathbf{D}^{\eta-\pc{d}{j^\star}} \\ 
  \hline 
  \mathbf{D}^{\eta-\pc{i^\star}{s}} & \mathbf{D}^{\eta-\pc{i^\star}{j^\star}}
\end{array} 
\right] 
=
  \left[ 
\begin{array}{c|c} 
  {\mathbf{0}_{\eta \times \eta}} & \mathbf{D}^{\eta-\pc{d}{j^\star}} \\ 
  \hline 
  \mathbf{D}^{\eta-\pc{i^\star}{s}} & \mathbf{D}^{\eta-\pc{i^\star}{j^\star}}
\end{array} 
\right], 
\end{align*}
the {rank of which} is at least $\pc{i^\star}{s} + \pc{d}{j^\star}$. This provides a lower bound {on the} rank of $\inc{\Omega}{\mathcal{S}}$. 
 Moreover, if $\Omega^c \cap \mathcal{S}^c = \varnothing$, then $\inc{\Omega}{\mathcal{S}}$ only consists of one row induced by $\{d\}$ and columns induced by $\{\pc{d}{j}: j \in \Omega \cap \mathcal{S}\}$  and hence, $\text{rank}(\inc{\Omega}{\mathcal{S}})=\max_{j \in \Omega \cap \mathcal{S}} \pc{d}{j}$. Similarly, if $\Omega \cap \mathcal{S} = \varnothing$, then $\inc{\Omega}{\mathcal{S}}$ has only one column and hence,  $\text{rank}(\inc{\Omega}{\mathcal{S}})=\max_{i \in \Omega^c \cap \mathcal{S}^c} \pc{i}{s}$.
{This concludes the proof of Proposition~\ref{prop:Propfs}.}
 \end{proof}

\begin{prop} \label{submodularity}
For a given state $\mathcal{S} \subseteq [\N]$, $\rinc{\Omega}{\mathcal{S}}$ is submodular in $\Omega$, that is,
\begin{equation}
    \rinc{\Omega_1}{\mathcal{S}}+\rinc{\Omega_2}{\mathcal{S}} \geq \rinc{\Omega_1 \cap \Omega_2}{\mathcal{S}}+\rinc{\Omega_1 \cup \Omega_2}{\mathcal{S}},
\end{equation}
for any subsets $\Omega_1,\Omega_2 \subseteq [\N]$.
Similarly, for a given cut ${\Omega \subseteq [\N]}$, $\rinc{\Omega}{\mathcal{S}}$ is submodular in $\mathcal{S}$, that is, \[\rinc{\Omega}{\mathcal{S}_1}+\rinc{\Omega}{\mathcal{S}_2} \geq \rinc{\Omega}{\mathcal{S}_1 \cup \mathcal{S}_2}+\rinc{\Omega}{\mathcal{S}_1 \cap \mathcal{S}_2},\] {for any subsets $\mathcal{S}_1,\mathcal{S}_2 \subseteq [\N]$.}
\end{prop}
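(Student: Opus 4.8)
The plan is to pass to the mutual-information form of $\rinc{\Omega}{\cS}$ recorded in~\eqref{eq:def:pc} and exploit that, in the linear deterministic model with independent uniform inputs, every cut value is a \emph{conditional entropy} of a subset of the output signals given a subset of the input signals. Submodularity then collapses onto the submodularity of the joint Shannon entropy function, $H(A)+H(B)\ge H(A\cup B)+H(A\cap B)$, which is standard. First I would record the identity
\[
\rinc{\Omega}{\cS} = H\!\left(Y_d, Y_{\Omega^c\cap\cS^c}\,\middle|\,X_{\Omega^c\cap\cS}\right),
\]
obtained from~\eqref{eq:def:pc} by expanding the mutual information as a difference of two conditional entropies and noting that the subtracted one vanishes: given all transmitting inputs $X_s, X_{\cS}=X_{(\Omega\cap\cS)\cup(\Omega^c\cap\cS)}$, the outputs $Y_d, Y_{\Omega^c\cap\cS^c}$ are deterministic.

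To handle both directions uniformly, I would introduce the fixed ``all-transmit'' outputs $\widetilde Y_d,\widetilde Y_1,\dots,\widetilde Y_\N$, in which every relay contributes as if it were transmitting, and establish the configuration-free reformulation
\[
\rinc{\Omega}{\cS} = H\!\left(\widetilde Y_d, \widetilde Y_{\Omega^c\cap\cS^c}\,\middle|\,X_{\Omega^c\cup\cS^c}\right).
\]
This rests on two facts, each a consequence of the mutual independence of the inputs together with the deterministic (linear) channel: (a) since $Y_d, Y_{\Omega^c\cap\cS^c}$ depend only on $X_s,X_{\cS}$, they are conditionally independent of $X_{\cS^c}$ given $X_{\Omega^c\cap\cS}$, so the conditioning may be enlarged to $X_{(\Omega^c\cap\cS)\cup\cS^c}=X_{\Omega^c\cup\cS^c}$ without changing the entropy; and (b) once $X_{\cS^c}$ lies in the conditioning, each $Y_i$ ($i\in\cS^c$) and $Y_d$ differ from $\widetilde Y_i,\widetilde Y_d$ only by a linear combination of the conditioned $X_{\cS^c}$, so they may be replaced by their tilde counterparts. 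The purpose of this step is that the $\widetilde Y$'s and $X$'s form a single collection of random variables that does not depend on $(\Omega,\cS)$.

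With this in hand, for fixed $\cS$ I would attach to each cut the variable set
\[
V(\Omega)=\{\widetilde Y_d\}\cup\{\widetilde Y_i: i\in\Omega^c\cap\cS^c\}\cup\{X_j: j\in\Omega^c\cup\cS^c\},
\]
verify the lattice identities $V(\Omega_1)\cup V(\Omega_2)=V(\Omega_1\cap\Omega_2)$ and $V(\Omega_1)\cap V(\Omega_2)=V(\Omega_1\cup\Omega_2)$ (complementation interchanges $\cup$ and $\cap$, and the index sets $\Omega^c\cap\cS^c,\ \Omega^c\cup\cS^c$ propagate these through intersections and unions), and then apply submodularity of entropy to $V(\Omega_1),V(\Omega_2)$. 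Because the inputs are independent, the conditioning correction $H(X_{\Omega^c\cup\cS^c})=\eta\,|\Omega^c\cup\cS^c|$ is modular in $\Omega$, so subtracting it preserves the inequality and delivers submodularity in $\Omega$. For the second claim I would run the identical computation with $\Omega$ fixed and $\cS$ varying (equivalently, $T=\cS^c$ varying): both the output index set $\Omega^c\cap T$ and the conditioning index set $\Omega^c\cup T$ move monotonically with $T$, so the same set algebra and the same two entropy inequalities apply verbatim.

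I expect the reformulation, not the submodularity step, to be the main obstacle: one must check carefully that replacing $Y$ by the configuration-independent $\widetilde Y$ and enlarging the conditioning set are both entropy-preserving, which is precisely where the determinism of the channel and the mutual independence of the inputs are used. Once the cut value is written as $H(\cdot\mid\cdot)$ over a fixed variable collection with a modular conditioning correction, both submodularity statements follow immediately from the submodularity of Shannon entropy.
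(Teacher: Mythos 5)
Your proof is correct, but note that it is not parallel to anything actually in the paper: the paper's entire ``proof'' of Proposition~\ref{submodularity} is a deferral to~\cite{CardoneTIT2016}, so what you have written is a self-contained argument where the text supplies none. The substance checks out. The reformulation $\rinc{\Omega}{\cS}=H\bigl(\widetilde{Y}_d,\widetilde{Y}_{\Omega^c\cap\cS^c}\,\big|\,X_{\Omega^c\cup\cS^c}\bigr)$ is valid for exactly the two reasons you give: enlarging the conditioning from $X_{\Omega^c\cap\cS}$ to $X_{\Omega^c\cup\cS^c}$ costs nothing because $(Y_d,Y_{\Omega^c\cap\cS^c},X_{\Omega^c\cap\cS})$ is a deterministic function of $(X_s,X_{\cS})$ and hence jointly independent of $X_{\cS^c}$; and swapping $Y$ for the all-transmit $\widetilde{Y}$ costs nothing because the two differ by a linear function of the conditioned $X_{\cS^c}$. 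The lattice identities $V(\Omega_1)\cup V(\Omega_2)=V(\Omega_1\cap\Omega_2)$ and $V(\Omega_1)\cap V(\Omega_2)=V(\Omega_1\cup\Omega_2)$ are correct by De Morgan and distributivity, the correction $H(X_{\Omega^c\cup\cS^c})=\eta\,|\Omega^c\cup\cS^c|$ is modular so it cancels across the inequality, and passing from $T=\cS^c$ back to $\cS$ swaps unions and intersections, which is precisely why the second claim lands in the stated form. What your route buys is that both submodularity statements (in $\Omega$ and in $\cS$) drop out of a single application of entropy submodularity to one fixed collection of random variables; this is genuinely cleaner than a direct rank argument on the transfer matrices $\inc{\Omega}{\cS}$, which is awkward because enlarging $\Omega$ simultaneously adds column blocks and deletes row blocks. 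One presentational caveat: the identity in~\eqref{eq:def:pc} equating mutual information with rank already presupposes i.i.d.\ uniform inputs, so you should state explicitly at the outset that all entropies are evaluated under that input distribution; with that sentence added, the argument is complete and rigorous.
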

\begin{proof}
{The proof is given  in~\cite{CardoneTIT2016}.}
\end{proof}

\section{Proof of Theorem~\ref{thm1}}
\label{sec:ProofTheorem}
In this section, we present the proof of Theorem~\ref{thm1}, which consists of three main steps. 
We first introduce an auxiliary optimization problem {in~\eqref{maxmincutupper}} in Section~\ref{sec:upperbound}, {which is a} relaxed version of the problem in~\eqref{maxmincut} and hence, its solution $\mathsf{C}^{\text{U}}$ provides an upper bound {on the} optimum value of~\eqref{maxmincut}, i.e.,  $\mathsf{C}^{\text{U}} \geq \mathsf{C}^{\text{LD}}$.  In Section~\ref{upperbound}, we propose a solution $(\bm{\lambda}^\star,t^\star)$ for the (relaxed) optimization problem in~\eqref{maxmincutupper},  where ${\lambda}^\star_\cS =0$ for all $\cS\notin \SS$. We show  that, under the {conditions} of Theorem~\ref{thm1}, $(\bm{\lambda}^\star, t^\star)$ is  feasible and optimal,  which leads to   $\mathsf{C}^{\text{U}}= t^\star$.
{Finally,} in Section~\ref{feasibilityofapprox} we  show that the proposed solution is feasible for the original problem in~\eqref{maxmincut}, implying that $\mathsf{C}^{\text{LD}} \geq t^\star$. Therefore, putting the three results together, we get  $ t^\star = \mathsf{C}^{\text{U}} \geq  \mathsf{C}^{\text{LD}} \geq t^\star$. This shows that $\mathsf{C}^{\text{LD}}= t^\star$ {can} be attained by $\bm{\lambda}^\star$ that satisfies the claim of Theorem~\ref{thm1}. This concludes the proof of the theorem. 

\subsection{An Upper Bound for the Approximate Capacity}
\label{sec:upperbound}
The optimization problem in~\eqref{maxmincut} consists of $2^\N$ cut constraints. We can relax these constraints except for $\N+1$ of them.  More formally, we define 
\begin{align} \label{maxmincutupper}
\begin{split}
    \mathsf{C}^{\text{U}}= \max_{\bm{\lambda}}  \ & t  \\
    \text{s.t. } & t  \leq \cut{[i:\N]} =  \sum_{\mathcal{S} \subseteq [\N]} \lambda_{\mathcal{S}} \rinc{[i:\N]}{\mathcal{S}}, \quad i \in [\N+1],  \\
     & \cut{p} \triangleq \sum_{\mathcal{S}{\subseteq{[\N]}}} \lambda_{\mathcal{S}} \leq 1,\\
     &\lambda_{\mathcal{S}} \geq 0,  \qquad   \mathcal{S} \subseteq[\N]  ,
\end{split}
\end{align}
where ${[\N+1:\N]}={\varnothing}$.

Note that the optimization problem in~\eqref{maxmincutupper} is less constrained compared to~\eqref{maxmincut}. Hence, {the problem in~\eqref{maxmincutupper}} has a wider feasible set {than~\eqref{maxmincut},} and its maximum objective function cannot be smaller than that of the {problem in~\eqref{maxmincut}}, i.e., $\mathsf{C}^{\text{U}} \geq \mathsf{C}^{\text{LD}}$.

\subsection{An Optimal Solution for the {Optimization Problem in~\eqref{maxmincutupper}}} \label{upperbound}
We show that under {the conditions of Theorem~\ref{thm1},} the states {in $\SS$} are optimal for the upper bound $\mathsf{C}^{\text{U}}$ on the approximate capacity obtained by solving~\eqref{maxmincutupper}, that is, there exists an optimal solution with $(t,\lambda_{\SS}) \geq 0$ and $\lambda_{\mathcal{S}} = 0$, for all  $\mathcal{S} \notin \SS$. 
In particular, {we start by stating the following proposition, the proof of which can be found in Appendix~\ref{NewproofThm1}.}
\begin{prop}
\label{prop:OptUB:1}
Assume $\det{\bP} \neq 0$ and $\frac{(-1)^{\N+1} \PP{\N+1}}{ \det{\bP}} \geq 0$. Then, the variables 
\begin{align}
\label{eq:OptUB}
\begin{split}
        \lambda^\star_{\varnothing} &:= \lambda^\star_{\{n+1\}} = (-1)^{n+1}\frac{\PP{n+1} }{\det{\bP}},\\
        \lambda_{\{i\}}^\star &=
        (-1)^{i}\frac{\PP{i} }{\det{\bP}}, \qquad i\in [\N], \\
        \lambda_{\cS}^{\star} &=0, \qquad \cS \notin \SS,
        \end{split}
\end{align}
are non-negative. 
\end{prop}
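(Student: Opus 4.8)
The plan is to read the proposed values off a single linear system, and then prove non-negativity by exposing an M-matrix (flow-conservation) structure hidden in the cut constraints of~\eqref{maxmincutupper}.

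First I would restrict to $\lambda_\cS=0$ for $\cS\notin\SS$ and collect the unknowns as $x_0=t$ and $x_j=\lambda_{\{j\}}$ for $j\in[\N+1]$, using the conventions $\lambda_{\{\N+1\}}=\lambda_\varnothing$ and $\rinc{[i:\N]}{\{\N+1\}}=\rinc{[i:\N]}{\varnothing}$. Since $\PQ{i}{j}=-\rinc{[i:\N]}{\{j\}}$ on $[\N+1]^2$, $\PQ{i}{0}=\PQ{0}{j}=1$, and $\PQ{0}{0}=0$, demanding that all retained cuts $\cut{[i:\N]}$ equal $t$ for $i\in[\N+1]$ and that $\sum_j\lambda_{\{j\}}=1$ is exactly the system $\bP\,\mathbf x=\mathbf e_0$, where $\mathbf e_0$ is the first standard basis vector. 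As $\det{\bP}\neq0$, Cramer's rule (expanding the replaced $i$-th column along its single nonzero entry, which sits in row $0$) gives $x_i=(-1)^i\PP{i}/\det{\bP}$, reproducing~\eqref{eq:OptUB}. In particular $\lambda^\star_\varnothing=(-1)^{\N+1}\PP{\N+1}/\det{\bP}\geq0$ is precisely the hypothesis, and expanding $\det{\bP}$ along row $0$ gives $\det{\bP}=\sum_{i=1}^{\N+1}(-1)^i\PP{i}$, so that $\sum_{i\in[\N+1]}\lambda^\star_{\{i\}}=1$.

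It then remains to prove $\lambda^\star_{\{i\}}\geq0$ for $i\in[\N]$. Subtracting the $(i{+}1)$-th cut equation from the $i$-th eliminates $t$ and yields the flow-conservation relations $\sum_{j\in[\N+1]}\bigl(\rinc{[i:\N]}{\{j\}}-\rinc{[i+1:\N]}{\{j\}}\bigr)\lambda^\star_{\{j\}}=0$ for $i\in[\N]$. Writing $\tilde M_{i,j}=\rinc{[i:\N]}{\{j\}}-\rinc{[i+1:\N]}{\{j\}}$ for $i,j\in[\N]$ and moving the $j=\N+1$ column to the right, this is $\tilde M\,\tilde\lambda=\lambda^\star_\varnothing\,\mathbf c$ with $\tilde\lambda=(\lambda^\star_{\{1\}},\dots,\lambda^\star_{\{\N\}})^T$ and $\mathbf c_i=\rinc{[i+1:\N]}{\varnothing}-\rinc{[i:\N]}{\varnothing}=\pc{i}{s}-\pc{i-1}{s}\geq0$, where the evaluation uses the equality case of Proposition~\ref{prop:Propfs} and the assumed ordering of the left links. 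The sign pattern of $\tilde M$ is the key structural fact: passing from cut $[i:\N]$ to $[i+1:\N]$ moves relay $i$ to the destination side, which for $j\neq i$ (relay $i$ receiving) merely appends the output row block $Y_i$, so $\tilde M_{i,j}\leq0$, whereas for $j=i$ (relay $i$ transmitting) it deletes the input column block $X_i$, so by Proposition~\ref{prop:Propfs} $\tilde M_{i,i}\geq\pc{d}{i}\geq0$. Thus $\tilde M$ is a $Z$-matrix with non-negative diagonal.

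The main obstacle is to upgrade this $Z$-matrix pattern to inverse-positivity, i.e.\ to show that $\tilde M$ is a nonsingular M-matrix so that $\tilde M^{-1}\geq0$ entrywise; once this holds, $\tilde\lambda=\lambda^\star_\varnothing\,\tilde M^{-1}\mathbf c\geq0$ is immediate from $\lambda^\star_\varnothing\geq0$ and $\mathbf c\geq0$, which finishes the proof. To establish inverse-positivity I would exploit the explicit form of $\tilde M$ coming from Proposition~\ref{prop:Propfs}: its strictly-lower entries are constant along each row (equal to $\pc{i-1}{s}-\pc{i}{s}$ for $j\leq i-2$), which allows a column reduction that makes the leading principal minors tractable, and I would show these minors are positive using submodularity of $\rinc{\Omega}{\cS}$ in $\Omega$ (Proposition~\ref{submodularity}); positivity of all leading principal minors is equivalent to $\tilde M$ being a nonsingular M-matrix. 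Here the hypothesis $\det{\bP}\neq0$ enters through the fact that the same row reduction identifies $\det{\tilde M}$ with $(-1)^{\N+1}\PP{\N+1}=\lambda^\star_\varnothing\det{\bP}$ up to sign (so $\tilde M$ is nonsingular exactly when $\bP$ is), while the sign condition $\lambda^\star_\varnothing\geq0$ pins down the sign of $\det{\tilde M}$ consistently with the M-matrix property. I expect verifying the positivity of the principal minors from submodularity to be the most delicate part of the argument.
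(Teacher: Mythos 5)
Your first step (restricting to $\SS$, writing the tight constraints as $\bP\,\mathbf{x}=\mathbf{e}_0$, and reading off~\eqref{eq:OptUB} via Cramer's rule) is correct and coincides with the paper's setup, and your observation that the differenced system $\tilde{M}\tilde{\lambda}=\lambda^\star_\varnothing\mathbf{c}$ has a $Z$-matrix sign pattern is also correct. The problem is that the entire content of the proposition is concentrated in the one step you did not carry out: proving that $\tilde{M}$ is inverse-positive. That claim is not a routine consequence of submodularity, and as stated it is actually false under the hypotheses. Your own row reduction shows $\det{\tilde{M}}=\PP{\N+1}$, and the hypothesis $(-1)^{\N+1}\PP{\N+1}/\det{\bP}\geq 0$ explicitly permits $\PP{\N+1}=0$ while $\det{\bP}\neq 0$, so your parenthetical claim that ``$\tilde{M}$ is nonsingular exactly when $\bP$ is'' is wrong. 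A concrete instance is $\N=1$ with $\pc{d}{1}=0<\pc{1}{s}$: then $\det{\bP}=\pc{1}{s}\neq 0$ and $\PP{2}=\pc{d}{1}=0$, so the hypotheses hold, yet $\tilde{M}=[\,0\,]$ is singular. In this admissible degenerate regime $\tilde{M}^{-1}$ does not exist, the reduced system reads $\tilde{M}\tilde{\lambda}=\mathbf{0}$ and has a nontrivial kernel, so it cannot determine $\tilde{\lambda}$, let alone its sign; non-negativity must then be recovered from the full system, which your argument never does.

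Even away from this degeneracy, positivity of \emph{all} leading principal minors of $\tilde{M}$ is delicate: the first one equals $\tilde{M}_{1,1}=\pc{d}{1}$, and nothing in the two scalar hypotheses directly forbids $\pc{d}{1}=0$; ruling out such ties requires showing that they force $\PP{\N+1}=0$ or $\det{\bP}=0$. This borderline analysis is precisely what the paper's proof supplies by a different route: it uses the normalization constraint to obtain a \emph{triangular} recursion expressing $\lambda^\star_{\{i\}}\left(\pc{i}{s}-\pc{(i-1)}{s}\right)$ in terms of $\{\lambda^\star_{\{j\}}:j>i\}$, assumes some $\lambda^\star_{\{k\}}<0$, propagates non-positivity backwards by induction (using Lemma~\ref{prop:extra}), and shows that the resulting chain of forced equalities produces a degeneracy $\pc{j}{s}=\pc{(j-1)}{s}=\rinc{[j:\N]}{\{j\}}$, which by Lemma~\ref{singcond} implies $\det{\bP}=0$, contradicting the hypothesis. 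Your M-matrix program would need analogues of both lemmas to close the step you yourself flagged as ``the most delicate part''; as submitted, the proposal is a strategy with the decisive step missing, not a proof.
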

{We now leverage the {Karush–Kuhn–Tucker (KKT)} conditions to prove the following proposition.}
\begin{prop}
\label{prop:OptUB}
Assume $\det{\bP} \neq 0$ and $\frac{(-1)^{\N+1} \PP{\N+1}}{ \det{\bP}} \geq 0$. Then,  $\bm{\lambda}^\star=\{\lambda_{\cS}:\cS\subseteq [n]\}$ defined in~\eqref{eq:OptUB}  is an optimal solution for {the 
problem} in~\eqref{maxmincutupper}. 
Consequently, we have 
\begin{align}\label{eq:optUB-t}
\mathsf{C}^{\text{U}} = t^\star=  \cut{[i:\N]}^\star = \sum_{\cS \in \SS} \lambda_{\cS}^\star \rinc{[i:\N]}{\mathcal{S}},
\end{align}
for every $i\in [\N+1]$. 
\end{prop}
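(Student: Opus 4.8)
The plan is to certify optimality of $(\bm{\lambda}^\star,t^\star)$, with $t^\star\triangleq \PP{0}/\det{\bP}$, via the KKT conditions of the linear program in~\eqref{maxmincutupper}, by exhibiting matching primal and dual solutions. Since~\eqref{maxmincutupper} is an LP, primal feasibility together with a dual feasible point of equal objective value suffices for optimality.

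First I would record primal feasibility and, simultaneously, equation~\eqref{eq:optUB-t}. The key observation is that, by Cramer's rule, the augmented vector $\mathbf{c}=(t^\star,\lambda^\star_{\{1\}},\dots,\lambda^\star_{\{\N+1\}})^T$ (with $\lambda^\star_{\{\N+1\}}=\lambda^\star_{\varnothing}$) is exactly the $0$-th column of $\bP^{-1}$, i.e. $\bP\,\mathbf{c}=\mathbf{e}_0$; indeed $\mathbf{c}_i=(-1)^i\PP{i}/\det{\bP}$ matches~\eqref{eq:OptUB}, and $\mathbf{c}_0=\PP{0}/\det{\bP}=t^\star$ since $\PP{0}=\det{\PQ{[\N+1]}{[\N+1]}}$. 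Reading off row $0$ of $\bP\,\mathbf{c}=\mathbf{e}_0$ gives $\sum_{\cS\in\SS}\lambda^\star_{\cS}=1$ (so the power constraint is tight), while reading off row $i\in[\N+1]$, using $\PQ{i}{j}=-\rinc{[i:\N]}{\{j\}}$ and the convention $\rinc{\Omega}{\{\N+1\}}=\rinc{\Omega}{\varnothing}$, yields $\cut{[i:\N]}^\star=\sum_{\cS\in\SS}\lambda^\star_{\cS}\rinc{[i:\N]}{\cS}=t^\star$ for every $i\in[\N+1]$. Together with $\lambda^\star\ge 0$ from Proposition~\ref{prop:OptUB:1}, this establishes primal feasibility and~\eqref{eq:optUB-t}; note $t^\star\ge 0$ as a non-negative combination of ranks.

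Next I would construct the dual certificate. The dual of~\eqref{maxmincutupper} minimizes $\nu$ over $\mu_i\ge 0$, $i\in[\N+1]$, subject to $\sum_i\mu_i=1$ and $\nu\ge\sum_{i\in[\N+1]}\mu_i\rinc{[i:\N]}{\cS}$ for all $\cS\subseteq[\N]$. The stationarity and complementary-slackness conditions (using $\lambda^\star_{\cS}=0$ for $\cS\notin\SS$, and the fact that all cut constraints are tight) reduce to the $\N+2$ linear equations $\sum_i\mu_i=1$ and $\sum_{i\in[\N+1]}\mu_i\rinc{[i:\N]}{\cS}=\nu$ for $\cS\in\SS$. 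Because $\det{\bP}\neq 0$, these admit the unique solution given by the $0$-th row of $\bP^{-1}$, namely $\mu^\star_i=(\bP^{-1})_{0i}$ and $\nu^\star=(\bP^{-1})_{00}=\PP{0}/\det{\bP}=t^\star$; this is checked by left-multiplying $\bP$ by this row and reading $\mathbf{d}^T\bP=\mathbf{e}_0^T$ column by column, mirroring the primal step, and it matches the primal objective $t^\star$ as strong duality requires.

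It then remains to verify dual feasibility, which splits into two parts. For the constraints over general states I would invoke the submodularity of $\rinc{[i:\N]}{\cS}$ in $\cS$ (Proposition~\ref{submodularity}): a short induction gives $\rinc{[i:\N]}{\cS}\le\sum_{j\in\cS}\rinc{[i:\N]}{\{j\}}-(|\cS|-1)\rinc{[i:\N]}{\varnothing}$, so taking the $\mu^\star$-weighted sum and substituting the equalities $\sum_i\mu^\star_i\rinc{[i:\N]}{\{j\}}=\sum_i\mu^\star_i\rinc{[i:\N]}{\varnothing}=t^\star$ collapses the bound to $\sum_i\mu^\star_i\rinc{[i:\N]}{\cS}\le t^\star$ for every $\cS\subseteq[\N]$. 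The remaining step, which I expect to be the main obstacle, is the sign condition $\mu^\star_i\ge 0$, i.e. $(-1)^iM_{i0}/\det{\bP}\ge 0$ for the minors $M_{i0}$ of $\bP$ obtained by deleting row $i$ and column $0$; this is the dual analogue of Proposition~\ref{prop:OptUB:1} and, as there, should follow from the ordering $\pc{1}{s}\le\dots\le\pc{\N}{s}$ and the closed-form structure of $\rinc{[i:\N]}{\{j\}}$ afforded by Proposition~\ref{prop:Propfs} (note also that $\mu^\star\ge 0$ is what licenses the weighted sum above). With both parts in hand, $(\mu^\star,\nu^\star)$ is dual feasible with $\nu^\star=t^\star$; weak duality (equivalently, averaging the cut constraints with weights $\mu^\star$ and using $\sum_{\cS}\lambda_{\cS}\le 1$ and $t^\star\ge 0$) gives $\mathsf{C}^{\text{U}}\le t^\star$, while $(\bm{\lambda}^\star,t^\star)$ attains $t^\star$, whence $\mathsf{C}^{\text{U}}=t^\star$ and $\bm{\lambda}^\star$ is optimal.
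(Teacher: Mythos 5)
Your strategy coincides with the paper's: the paper also certifies optimality of $(\bm{\lambda}^\star,t^\star)$ through the KKT conditions of the LP in~\eqref{maxmincutupper}, with exactly your primal solution (the $0$th column of $\bP^{-1}$, obtained by forcing the $\N+2$ constraints to hold with equality), exactly your dual candidate (the $0$th row of $\bP^{-1}$, which is the paper's $\bm{\mu}$ defined by~\eqref{eq:mu}), and exactly your telescoping submodularity argument to handle the dual constraints for states $\cS\notin\SS$. Those portions of your write-up are correct and match the paper step for step.

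The genuine gap is the sign condition $\mu^\star_i\geq 0$ for $i\in[\N+1]$, which you identify as ``the main obstacle'' but then dispose of with ``should follow from the ordering $\pc{1}{s}\le\dots\le\pc{\N}{s}$ and the closed-form structure of $\rinc{[i:\N]}{\{j\}}$.'' That is not a proof, and it is the hardest part of the whole argument: the paper isolates it as Proposition~\ref{prop:mu} and devotes all of Appendix~\ref{app:nonnegativelvar} to it. The difficulty your sketch does not anticipate is degeneracy among the source link capacities. The natural recursion (subtract column $\N+1$ of~\eqref{eq:mu} from column $k$ and use Proposition~\ref{prop:Propfs} to get $\left(\pc{k}{s}-\pc{(k-1)}{s}\right)\mu_{k+1}=\sum_{j=1}^{k}\left(\rinc{[j:\N]}{\{k\}}-\pc{(j-1)}{s}\right)\mu_j$, with non-negative coefficients, hence sign agreement by induction) breaks down whenever $\pc{k}{s}=\pc{(k-1)}{s}$: the coefficient of $\mu_{k+1}$ vanishes and the recursion says nothing about the sign of $\mu_{k+1}$. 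The paper handles this by letting $\ell$ be the largest index with $\pc{\ell}{s}=\pc{(\ell-1)}{s}$ and first proving $\mu_1=\cdots=\mu_\ell=0$, which requires a separate rank/linear-independence argument (on a matrix of column differences of $\bP$, using $\det{\bP}\neq 0$), and only then running the sign-agreement induction on $[\ell+1:\N+1]$ and invoking $\sum_i\mu_i=1$ to fix the common sign as positive. Without this step (or a substitute for it), your dual certificate is incomplete; moreover, as you yourself note, your submodularity step is only valid once $\mu^\star\geq 0$ is known, so the rest of the dual feasibility argument cannot stand independently of this gap.
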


\begin{proof}[Proof of  Proposition~\ref{prop:OptUB}]
The {proof leverages} the KKT conditions. For the KKT multipliers ${\bm{\mu}=(\mu_p, \mu_1,\dots,\mu_{\N+1})}$ and ${\bm{\sigma}=(\sigma_{\mathcal{S}}:  \mathcal{S} \subseteq [\N])}$, the Lagrangian for the optimization problem in~\eqref{maxmincutupper} is given~by
\begin{align} \label{lagrangian}
\begin{split}
  \mathcal{L}(\bm{\mu}, \bm{\sigma}, \bm{\lambda}, t)=&-t +  \sum_{i\in [\N+1]} \mu_i (t-g_{[i:\N]}) 
\\& + \mu_p (g_p-1) 
  - \sum_{\mathcal{S}\subseteq [\N]}  \sigma_\mathcal{S} \lambda_{\mathcal{S}} .
\end{split}
\end{align}
In the following, we proceed with a choice of $(\bm{\mu}, \bm{\sigma})$ where $\bm{\mu}$ is the solution of 
\begin{align} \label{eq:mu}
     \begin{bmatrix}
\mu_p & \mu_{1}  & \ldots & \mu_{\N} & \mu_{\N+1} \end{bmatrix} \pq{} \!=\! 
\begin{bmatrix} 1  & 0 & \ldots & 0 & 0 \end{bmatrix} \!\!,
\end{align}
and 
\begin{align}\label{eq:sigma}
    \sigma_{\cS} =
        \mu_p - \sum_{i=1}^{\N+1} \mu_i \rinc{[i:\N]}{\cS},
\end{align}
for every $\cS\subseteq [\N]$. 
{We next prove} the optimality of $(\bm{\lambda}^\star, t^\star)$ by showing that the set of KKT multipliers $(\bm{\mu}, \bm{\sigma})$ defined in~\eqref{eq:mu} and~\eqref{eq:sigma} together with  $(\bm{\lambda}^\star, t^\star)$ defined in~\eqref{eq:OptUB} and~\eqref{eq:optUB-t} satisfy the following four groups of conditions. 

\smallskip
\noindent $\bullet$ {\em{Primal Feasibility.}} First, note that {Proposition~\ref{prop:OptUB:1}} guarantees that the  constraint  ${\lambda_\cS^\star \geq 0}$ is satisfied for every ${\cS\subseteq [n]}$.  In order to show the feasibility of the solution, it remains {to show 
that} $t^\star \leq \cut{[i:n]}$ for every $i\in[n+1]$ and $\sum_{\cS\subseteq [n]} \lambda^\star_\cS \leq 1$. Note that by forcing {these inequalities} to hold with equality, and setting  $\lambda_{\mathcal{S}}  =0$, for all $\mathcal{S} \subseteq [\N]$ with  $\mathcal{S} \neq \SS$ in~\eqref{maxmincutupper}, we obtain a system of $(\N+2)$ linear equations in $(\N+2)$ variables (namely $t$ and $\lambda_\cS$ for $\cS\in\SS$), given by
\begin{align*} 
    &\pq{} \begin{bmatrix}
t & \lambda_{\{1\}}  & \ldots & \lambda_{\{\N\}} & {\lambda_{\varnothing}} \end{bmatrix}^T \!=\! 
\begin{bmatrix} 1  & 0 & \ldots & 0 & 0 \end{bmatrix}^T \!\!,
\end{align*}
where $\pq{}$ is the matrix defined in~\eqref{eq:MatrP}. The solution of this system of linear {equations} is indeed given in~\eqref{eq:OptUB} and~\eqref{eq:optUB-t}. 
Therefore, the solution ${(\bm{\lambda}^\star, t^\star)}$ is feasible for the optimization {problem} in~\eqref{maxmincutupper}.  In particular, it is worth noting that $t^\star = \sum_{\cS \in \SS} \lambda_{\cS}^\star \rinc{[i:\N]}{\mathcal{S}}$ is guaranteed by the  $i$th row of the matrix identity above, and hence~\eqref{eq:optUB-t} holds for  all values of $i\in[n+1]$.

\smallskip
\noindent $\bullet$ {\em{Complementary Slackness.}} 
{We need} to show that $(\bm{\mu}, \bm{\sigma})$ and the solution $(\bm{\lambda}^\star, t^\star)$ given in~\eqref{eq:OptUB} satisfy 
\begin{itemize}
    \item $\mu_i (t^\star-\cut{[i:\N]}^\star) =0$  for all $i \in [\N+1]$;
    \item $\mu_p (\cut{p}^\star-1)=0$;
    \item and $\sigma_\cS \lambda_\cS^\star = 0$ for every $\cS\subseteq [\N]$. 
\end{itemize}
The first and the second conditions are readily implied by~\eqref{eq:OptUB} for any choice of $\bm{\mu}$. Moreover, the third condition holds for $\cS\notin \SS$, since we have $\lambda_{\cS}^\star=0$ whenever $\cS\notin \SS$. Finally, consider some $\cS\in \SS$, say $\cS=\{j\}$ where $j\in [\N+1]$ (and $j=\N+1$ if $\cS=\varnothing$). Then, the definition of {$\sigma_{\{j\}}$} in~\eqref{eq:sigma} and the $j$th column of the matrix identity in~\eqref{eq:mu} imply that 
\[\sigma_{\{j\}} = \mu_p - \sum_{i=1}^{\N+1} \mu_i \rinc{[i:\N]}{\{j\}}= \bm{\mu}\cdot \bP_{[0:\N+1],j}= 0.
\]  
This ensures that $\sigma_{\mathcal{S}} \lambda_{\mathcal{S}}^\star=0$, for all $\mathcal{S} \in \SS$. 

\smallskip
\noindent $\bullet$ {\em{Stationarity.}} 
{ We aim to prove that, when evaluated in $\bm{\mu}$ as in~\eqref{eq:mu} and $\sigma_{\mathcal{S}}$ in~\eqref{eq:sigma},} the derivatives of the Lagrangian in~\eqref{lagrangian} with respect to $t$ and {$\lambda_{\mathcal{S}}, \mathcal{S} \subseteq [n]$, are zero.} By taking {the} derivative of $\mathcal{L}(\bm{\mu}, \bm{\sigma}, \bm{\lambda}^\star, t^\star)$ with respect to $t$ we get 
\begin{equation*} 
    \begin{split}
        \frac{\partial }{\partial t} \mathcal{L}(\bm{\mu}, \bm{\sigma}, \bm{\lambda}^\star, t^\star) &= -1+\sum_{i=1}^{\N+1} \mu_i \\
        &= -1 + \bm{\mu}\cdot \bP_{[0:\N+1],0} = -1+1=0. 
    \end{split}
\end{equation*}
Similarly, by taking {the} derivative with respect to $\lambda_\cS$ we get
\begin{align*}
    \frac{\partial }{\partial \lambda_{\cS}}\mathcal{L}(\bm{\mu}, \bm{\sigma}, \bm{\lambda}^\star, t^\star)&=\mu_p-\sum_{i=1}^{\N+1} \mu_i \rinc{[i:\N]}{\cS} -\sigma_{\cS}\\
    &= \sigma_\cS - \sigma_{\cS} =0,
\end{align*}
in which we used the definition of $\sigma_{\cS}$ in \eqref{eq:sigma}. 

\smallskip
\noindent $\bullet$ {\em{Dual Feasibility.}} 
In {this} last part, we need to prove that the KKT {multipliers in~\eqref{eq:mu} and~\eqref{eq:sigma}} are non-negative. First, we present the following proposition, { the proof of which can be found in Appendix~\ref{app:nonnegativelvar}.}

\begin{prop}\label{prop:mu}
All the entries of the vector $\bm{\mu}$ obtained from~\eqref{eq:mu} are non-negative. 
\end{prop}


Next, {we focus on the KKT multipliers $\sigma_{\mathcal{S}}, \mathcal{S} \subseteq [n]$ in~\eqref{eq:sigma}. For} an arbitrary state $\cS=\{a_1,a_2,\dots, a_k\}$, we can write 
\begin{align*}
\sum_{i=1}^{\N+1}  &\mu_i  \rinc{[i:\N]}{\cS} = 
   \sum_{i=1}^{\N+1}  \mu_i  \rinc{[i:\N]}{\{a_1, \ldots, a_k\}}\\
   &= \sum_{i=1}^{\N+1} \mu_i \left[\sum_{j=1}^{k-1} \left(\rinc{[i:\N]}{\{a_1, \ldots, a_{j+1}\}} -   \rinc{[i:\N]}{\{a_1, \ldots, a_{j}\}}\right) + \rinc{[i:\N]}{\{a_1\}} \right]\\
   &\leq  \sum_{i=1}^{\N+1} \mu_i \left[\sum_{j=1}^{k-1} \left(\rinc{[i:\N]}{\{ a_{j+1}\}} -   \rinc{[i:\N]}{\varnothing}\right) + \rinc{[i:\N]}{\{a_1\}} \right]\\
   &=  \sum_{i=1}^{\N+1} \mu_i \left[\left(\sum_{j=1}^{k} \rinc{[i:\N]}{\{ a_{j}\}}\right) -  (k-1) \rinc{[i:\N]}{\varnothing} \right]\\
   &= \sum_{j=1}^k \sum_{i=1}^{\N+1} \mu_i  \rinc{[i:\N]}{\{ a_{j}\}} -  (k-1) \sum_{i=1}^{\N+1} \mu_i \rinc{[i:\N]}{\varnothing}\\
   &= \sum_{j=1}^k \mu_p - (k-1)\mu_p = \mu_p,
\end{align*}
where the inequality {is due to Proposition~\ref{submodularity}, i.e., $\rinc{\Omega}{\cS}$ is submodular in $\cS$. Thus,} we get $\sigma_\cS = \mu_p - \sum_{i=1}^{\N+1}  \mu_i  \rinc{[i:\N]}{\cS} \geq 0$. This concludes the proof of Proposition~\ref{prop:OptUB}.
\end{proof}

\subsection{Feasibility of $(\bm{\lambda}^\star, t^\star)$ for $\mathsf{C}^{\text{LD}}$} 
\label{feasibilityofapprox}
In Section~\ref{upperbound}, we have shown that the solution $(\bm{\lambda}^\star, t^\star)$ given in~\eqref{eq:OptUB} is optimal for the optimization problem in~\eqref{maxmincutupper}.
This implies that $t^\star = \mathsf{C}^{\text{U}} \geq \mathsf{C}^{\text{LD}}$ where $\mathsf{C}^{\text{LD}}$ is the approximate capacity of the network, obtained by solving the problem in~\eqref{maxmincut}.
In the following, we aim to prove {that} $(\bm{\lambda}^\star, t^\star)$ {in~\eqref{eq:OptUB}} is a feasible solution for the optimization problem in~\eqref{maxmincut}, which in turn implies $\mathsf{C}^{\text{LD}}\geq t^\star$, and concludes the proof of Theorem~\ref{thm1}.  Towards this end, it  suffices to  show the feasibility of {such a} solution for~\eqref{maxmincut}, as stated in the following proposition.
\begin{prop}
\label{prop:FeasiAC}
The solution $(\bm{\lambda}^\star, t^\star)$ given in~\eqref{eq:OptUB} is feasible for~\eqref{maxmincut} and thus, $\mathsf{C}^{\text{LD}}\geq  t^\star$.
\end{prop}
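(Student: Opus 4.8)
The plan is to check that the pair $(\bm{\lambda}^\star, t^\star)$ from~\eqref{eq:OptUB} satisfies every constraint of the (unrelaxed) problem~\eqref{maxmincut}. Because $\bm{\lambda}^\star$ and $t^\star$ are exactly the solution that was shown optimal for the relaxation~\eqref{maxmincutupper}, the non-negativity constraints $\lambda^\star_\cS \geq 0$, the power constraint $\sum_{\cS} \lambda^\star_\cS \leq 1$, and the $\N+1$ suffix-cut constraints $t^\star \leq \cut{[i:\N]}$ (which by~\eqref{eq:optUB-t} actually hold with equality) are already guaranteed. Hence the only work left is to verify the remaining cut constraints $t^\star \leq \cut{\Omega}$ for every $\Omega \subseteq [\N]$ that is \emph{not} of the form $[i:\N]$.

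First I would record that $\cut{\Omega} = \sum_{\cS \in \SS} \lambda^\star_\cS\, \rinc{\Omega}{\cS}$ is itself submodular in $\Omega$: each $\rinc{\Omega}{\cS}$ is submodular in $\Omega$ by Proposition~\ref{submodularity}, the coefficients $\lambda^\star_\cS$ are non-negative by Proposition~\ref{prop:OptUB:1}, and a non-negative combination of submodular functions is submodular. The heart of the argument is then a ``peeling'' inequality: for any nonempty $\Omega$ with $m = \min \Omega$, one has $\cut{\Omega} \geq \cut{\Omega \setminus \{m\}}$. To prove it, apply submodularity to $\Omega_1 = \Omega$ and $\Omega_2 = [m+1:\N]$. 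Since $m$ is the smallest element of $\Omega$, we get $\Omega_1 \cap \Omega_2 = \Omega \setminus \{m\}$ and $\Omega_1 \cup \Omega_2 = [m:\N]$, so submodularity gives $\cut{\Omega} + \cut{[m+1:\N]} \geq \cut{\Omega \setminus \{m\}} + \cut{[m:\N]}$. As $[m:\N]$ and $[m+1:\N]$ are both suffix cuts, \eqref{eq:optUB-t} yields $\cut{[m:\N]} = \cut{[m+1:\N]} = t^\star$; these cancel and leave $\cut{\Omega} \geq \cut{\Omega \setminus \{m\}}$.

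Iterating the peeling inequality (repeatedly deleting the current minimum element) produces a non-increasing chain $\cut{\Omega} \geq \cut{\Omega_1} \geq \cdots \geq \cut{\varnothing} = t^\star$, where the final equality is again~\eqref{eq:optUB-t} with $i = \N+1$. This establishes $t^\star \leq \cut{\Omega}$ for all $\Omega \subseteq [\N]$, so $(\bm{\lambda}^\star, t^\star)$ is feasible for~\eqref{maxmincut}, whence $\mathsf{C}^{\text{LD}} \geq t^\star$, completing the proof.

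The main obstacle is choosing the right pair of sets to feed into submodularity. Pairing an arbitrary $\Omega$ with the suffix $[m+1:\N]$ is precisely what forces \emph{both} the intersection and the union onto the suffix chain, so that the two ``extra'' cut terms equal $t^\star$ and cancel. A naive attempt to compare $\Omega$ with a single suffix cut in one shot fails in general (for instance $\Omega = \{1,3\}$ with $\N = 4$ admits no single suffix $[i:\N]$ for which both $\Omega \cap [i:\N]$ and $\Omega \cup [i:\N]$ lie on the chain), which is exactly why the one-element-at-a-time peeling induction is required.
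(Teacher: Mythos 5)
Your proof is correct, and it rests on the same two pillars as the paper's argument: the submodularity of the cut value in $\Omega$ (Proposition~\ref{submodularity}) and the fact that under $\bm{\lambda}^\star$ every suffix cut equals $t^\star$, i.e., $\gs{[i:\N]}=t^\star$ for all $i\in[\N+1]$ (equation~\eqref{eq:optUB-t}, including $i=\N+1$, which gives $\gs{\varnothing}=t^\star$). The route, however, is genuinely different in its bookkeeping. The paper writes $\Omega$ as a union of maximal intervals $[a_1:b_1]\cup\cdots\cup[a_k:b_k]$ and telescopes \emph{upward}: it applies submodularity per state $\cS$ to the pair $\Omega\cup[a_{i+1}:\N]$ and $[b_i+1:\N]$ (inequality~\eqref{eq:subm-interval}), then sums over $\cS$ with the non-negative weights $\lambda^\star_\cS$, obtaining $\gs{\Omega}\geq\sum_{i=1}^{k+1}\gs{[a_i:\N]}-\sum_{i=1}^{k}\gs{[b_i+1:\N]}=(k+1)t^\star-kt^\star=t^\star$ in a single display. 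You instead first lift submodularity from the per-state ranks $\rinc{\Omega}{\cS}$ to the aggregated function $\gs{\Omega}=\sum_{\cS}\lambda^\star_\cS\rinc{\Omega}{\cS}$ (legitimate precisely because the weights are non-negative, by Proposition~\ref{prop:OptUB:1}), and then shrink $\Omega$ \emph{downward} one element at a time: pairing $\Omega$ with the suffix $[m+1:\N]$, $m=\min\Omega$, forces both the union $[m:\N]$ and the auxiliary set onto the suffix chain, so their values cancel and leave the monotonicity step $\gs{\Omega}\geq\gs{\Omega\setminus\{m\}}$, with the induction terminating at $\gs{\varnothing}=t^\star$. Your peeling costs $|\Omega|$ applications of submodularity versus the paper's $k$ (one per interval), but each step is minimal and the induction is mechanical, so it is arguably easier to verify; the paper's interval telescoping is more compact and makes explicit exactly which $2k+1$ suffix cuts enter the bound. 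Your closing observation about why a one-shot comparison with a single suffix fails (e.g., $\Omega=\{1,3\}$, $\N=4$) is also accurate, and it is exactly the obstruction that both decompositions are designed to circumvent.
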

\begin{proof}
Note that the two optimization {problems} in~\eqref{maxmincut} and~\eqref{maxmincutupper} have identical objectives and similar constraints. More precisely, the constraints in~\eqref{maxmincutupper} are a subset of those in~\eqref{maxmincut} and hence, they are clearly satisfied for \eqref{maxmincut} {also because} $(\bm{\lambda}^\star, t^\star)$ is an optimum solution for~\eqref{maxmincutupper}. Thus, we {only need to focus} on the constraints of the form $t^\star \leq g^\star_\Omega = \sum_{\cS \subseteq [\N]} \lambda^\star_{\cS} \rinc{\Omega}{\cS}$, which exclusively appear in~\eqref{maxmincut}. 

Towards this end, {we consider} an arbitrary cut ${\Omega=[a_1:b_1] \cup [a_2:b_2] \cup \ldots \cup [a_k:b_k] \subseteq [\N]}$, 
where ${ a_1 \leq b_1 \leq a_2 \leq b_2 \leq \ldots \leq a_k \leq b_k}$.  We also define $a_{k+1}=\N+1$. 
Recall from  Proposition~\ref{submodularity} that, for a given state $\mathcal{S}$, the function $\rinc{\Omega}{\mathcal{S}}$ is submodular in $\Omega$.
Then, for $\Omega_1 = \Omega\cup [a_{i+1}:\N]$ and $\Omega_2 = [b_i+1:\N]$ {with $i \in [k]$,} we have $\Omega_1 \cup \Omega_2 = \Omega\cup [a_{i}:\N]$ and $\Omega_1 \cap \Omega_2 = [a_{i+1}:\N]$. Thus, 
\begin{align}\label{eq:subm-interval}
    \rinc{\Omega\cup [a_{i+1}:\N]}{\cS} + \rinc{[b_i+1:\N]}{\cS} \geq 
    \rinc{\Omega\cup [a_{i}:\N]}{\cS} + \rinc{[a_{i+1}:\N]}{\cS},
\end{align}
for every $i\in [k]$. Moreover,  since $a_{k+1}=\N+1$,  we have  ${[a_{k+1}:\N]=\varnothing}$,  and ${\Omega \subseteq [a_1:\N]}$.  Thus, we obtain
\begin{align*}
&\sum_{\cS\subseteq [\N]} \lambda^\star_\cS \rinc{\Omega}{\mathcal{S}} \\
&=\sum_{\cS\subseteq [\N]} \lambda^\star_\cS \left[ \sum_{i=1}^k
\left( \rinc{\Omega\cup [a_{i+1}:\N]}{\cS} - \rinc{\Omega\cup [a_{i}:\N]}{\cS}\right) + \rinc{\Omega\cup [a_{1}:\N]}{\cS}\right]\\
&\geq \sum_{\cS\subseteq [\N]} \lambda^\star_\cS \left[ \sum_{i=1}^k
\left( \rinc{[a_{i+1}:\N]}{\cS} - \rinc{[b_{i}+1:\N]}{\cS}\right) + \rinc{[a_{1}:\N]}{\cS}\right]\\
&= \sum_{i=1}^{k+1} \sum_{\cS\subseteq [\N]} \lambda^\star_\cS 
 \rinc{[a_{i}:\N]}{\cS} -  \sum_{i=1}^{k} \sum_{\cS\subseteq [\N]} \lambda^\star_\cS \rinc{[b_{i}+1:\N]}{\cS}\\
 &= \sum_{i=1}^{k+1} {g^\star_{[a_i:\N]}} -  \sum_{i=1}^{k} {g^\star_{[b_i+1:\N]}}\\
 &= (k+1)t^\star - kt^\star = t^\star, 
\end{align*}
where the inequality {follows from~\eqref{eq:subm-interval}.} This implies that ${t^\star \leq \sum_{\cS\subseteq [\N]} \lambda^\star_\cS \rinc{\Omega}{\mathcal{S}}}$ {for any} $\Omega \subseteq [\N]$.
{This concludes the claim of Proposition~\ref{prop:FeasiAC} and completes the proof of Theorem~\ref{thm1}.} 
\end{proof}

\appendices

\section{Proof of  Proposition~\ref{prop:OptUB:1}}\label{NewproofThm1}

{We start by noting that, under the conditions of Theorem~\ref{thm1},
the value of}  $\lambda_{\varnothing}^\star$ in Proposition~\ref{prop:OptUB} is readily non-negative,  i.e.,
\begin{align*}
\lambda_{\varnothing}^\star=\lambda_{\{\N+1\}}^\star=
        (-1)^{\N+1}\frac{\PP{\N+1} }{\det{\bP}} \geq 0.
\end{align*}
Thus, in what follows we focus on showing that $\lambda_{\{i\}}^\star \geq 0$ for all $i \in [\N]$.
Towards this end, we first highlight that $(\bm{\lambda}^\star, t^\star)$ in Proposition~\ref{prop:OptUB} {is the solution of a system} of linear equations constructed as follows:
(i) setting  $\lambda_{\mathcal{S}}  =0$, for all $\mathcal{S} \subseteq [\N]$ with  $\mathcal{S} \neq \SS$ in~\eqref{maxmincutupper}; and (ii) forcing constraints $t\leq  \cut{[i:\N]}$ for $i \in [\N+1]$ {and $g_p\leq 1$} in~\eqref{maxmincutupper} to hold with equality. This system of $(\N+2)$ linear equations in $(\N+2)$ variables, is given by 
\begin{align} 
\label{schemeflowpre}
    &\pq{}\! \begin{bmatrix}
t & \lambda_{\{1\}}  & \ldots & \lambda_{\{\N\}} & {\lambda_{\varnothing}} \end{bmatrix}^T \!\!=\!\! 
\begin{bmatrix} 1  & 0 & \ldots & 0 & 0 \end{bmatrix}^T \!\!\!,
\end{align}
and hence, the equation corresponding to the row $i+1$ 
for $i \in [0:n]$ of~\eqref{schemeflowpre} is given by
\begin{align*} 
    t^\star &=\sum_{j=1}^{\N+1} \lambda_{\{j\}}^\star \rinc{[i+1:\N]}{\{j\}}\\
        &\stackrel{{\rm{(a)}}}{=}\lambda_{\varnothing}^\star \pc{i}{s} + \sum_{j=1}^{i-1} \lambda_{\{j\}}^\star \pc{i}{s} + \lambda_{\{i\}}^\star \pc{(i-1)}{s}+\sum_{j=i+1}^\N \lambda_{\{j\}}^\star \rinc{[i+1:\N]}{\{j\}} \\
    &\stackrel{{\rm{(b)}}}{=} \left(1-\sum_{j=i}^\N \lambda_{\{j\}}^\star \right) \pc{i}{s} + \lambda_{\{i\}}^\star \pc{(i-1)}{s}+\sum_{j=i+1}^\N \lambda_{\{j\}}^\star \rinc{[i+1:\N]}{\{j\}} \\
    &= \pc{i}{s} +\lambda_{\{i\}}^\star \left(\pc{(i-1)}{s}-\pc{i}{s} \right )+\sum_{j=i+1}^\N \lambda_{\{j\}}^\star \left(\rinc{[i+1:\N]}{\{j\}}-\pc{i}{s} \right),
\end{align*}
where in $\rm{(a)}$ we have evaluated $\rinc{[i+1:\N]}{\{j\}}$ using Proposition~\ref{prop:Propfs} and {we have used} the fact that  $\pc{1}{s} \leq \pc{2}{s} \leq \dots \leq \pc{\N}{s}$, and $\rm{(b)}$ follows from the identity $\lambda_\varnothing^\star + \sum_{j=1}^n \lambda_{\{j\}}^\star=1$. 

Using the above equation, we can now recursively express  $\lambda_{\{i\}}^\star$ in terms of $\{\lambda_{\{j\}}^\star: j>i\}$ for $i \in [0:\N]$, given by 
\begin{equation}
\begin{split} \label{lambdarec}
    &\lambda_{\{i\}}^\star \hspace{-1pt}\left (\pc{i}{s}\hspace{-1pt}-\hspace{-1pt}\pc{(i-1)}{s} \right )
     \hspace{-1pt}=\hspace{-1pt} \left (\pc{i}{s}-t^\star \right )+\hspace{-2pt}\sum_{j=i+1}^\N \lambda_{\{j\}}^\star \hspace{-2pt}\left(\rinc{[i+1:\N]}{\{j\}}\hspace{-1pt}-\hspace{-1pt}\pc{i}{s}\hspace{-1pt}\right),
    \end{split}
\end{equation}
where we define ${\pc{0}{s}= \pc{-1}{s}=0}$ and $\lambda_{\{0\}}^\star =0$, for the sake of completeness. 

Before we prove the claim, we present the following lemmas which will be used in the proof. 
\begin{lemma} \label{prop:extra}
If ${\pc{1}{s} \leq \pc{2}{s} \leq \cdots  \leq \pc{\N}{s}}$, then \[{\rinc{[a:n]}{\{j\}} -\pc{(a-1)}{s} \geq \rinc{[b:\N]}{\{j\}}-\pc{(b-1)}{s}},\] 
for all ${1\leq b < a \leq j \leq \N}$.
\end{lemma}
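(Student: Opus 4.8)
The plan is to reduce the two-index inequality to a one-step monotonicity statement and then prove that single step by a direct rank computation on the ADT transfer matrices. Writing $F(c) \triangleq \rinc{[c:\N]}{\{j\}} - \pc{(c-1)}{s}$, the claim is exactly $F(a) \geq F(b)$ for $b < a \leq j$. I would establish this by showing that $F$ is nondecreasing on $[1:j]$, i.e., that for every $c$ with $b \leq c \leq a-1$ (so that $c < j$ and relay $j$ stays a transmitter in both cuts),
\[
\rinc{[c+1:\N]}{\{j\}} - \rinc{[c:\N]}{\{j\}} \geq \pc{c}{s} - \pc{(c-1)}{s}.
\]
Summing this telescoping inequality over $c = b, \ldots, a-1$ collapses both sides and yields $\rinc{[a:\N]}{\{j\}} - \rinc{[b:\N]}{\{j\}} \geq \pc{(a-1)}{s} - \pc{(b-1)}{s}$, which is the lemma.

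Next I would set up the matrix picture for the single step. For the cut $[c:\N]$ with state $\{j\}$ (and $c \le j$), the nonzero row blocks of $\inc{[c:\N]}{\{j\}}$ are indexed by $\{d,1,\dots,c-1\}$ and its two column blocks by $\{s,j\}$; passing from $[c:\N]$ to $[c+1:\N]$ simply appends the row block of relay $c$, namely $[\,\bD^{\eta-\pc{c}{s}}\ \ \bD^{\eta-\pc{c}{j}}\,]$. Two structural facts drive the argument: the destination row contributes nothing in the $X_s$ column block because $\pc{d}{s}=0$ forces $\bD^{\eta}=\mathbf{0}$; and, by the assumed ordering $\pc{1}{s}\le\cdots\le\pc{\N}{s}$, every old relay $i\le c-1$ only ``sees'' source levels $\{0,\dots,\pc{i}{s}-1\}\subseteq\{0,\dots,\pc{(c-1)}{s}-1\}$ in its $X_s$ block.

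The heart of the proof is to exhibit $\pc{c}{s}-\pc{(c-1)}{s}$ genuinely new independent directions created by relay $c$. I would single out the $X_s$ columns indexed by the source levels $[\pc{(c-1)}{s}:\pc{c}{s}-1]$. By the second structural fact, each such column is identically zero on all old rows $\{d,1,\dots,c-1\}$, whereas in relay $c$'s new block $\bD^{\eta-\pc{c}{s}}$ it is a single standard unit vector, and these unit vectors sit at distinct positions as the level index varies. A standard rank-increment argument then applies: fixing a maximal independent set of columns of $\inc{[c:\N]}{\{j\}}$, those columns stay independent inside the larger matrix (they are unchanged on the old rows), and appending the $\pc{c}{s}-\pc{(c-1)}{s}$ special columns preserves independence, since any dependence restricted to the old rows kills the basis coefficients (the special columns vanish there), after which the distinct unit entries on the new block kill the remaining coefficients. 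Hence the rank grows by at least $\pc{c}{s}-\pc{(c-1)}{s}$, which is precisely the single-step inequality.

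The main obstacle I anticipate is bookkeeping rather than conceptual: carefully verifying the exact placement of the unit entries of $\bD^{\eta-m}$, so that the old rows genuinely miss all levels $\ge \pc{(c-1)}{s}$ while the new block captures the levels up to $\pc{c}{s}-1$ at distinct coordinates, and phrasing the rank-increment step so that it is valid over $\mathbb{F}_2$. The latter is harmless here, since the only entries that matter are $0/1$ unit-vector patterns, so the independence argument goes through over any field.
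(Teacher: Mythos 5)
Your proposal is correct and takes essentially the same approach as the paper's proof: both arguments exploit the fact that, under the ordering $\pc{1}{s} \leq \cdots \leq \pc{\N}{s}$, the source-level columns indexed by $[\pc{(b-1)}{s}:\pc{(a-1)}{s}-1]$ are identically zero in the smaller transfer matrix, while the appended relay block contains distinct unit entries in exactly those columns, forcing the rank to grow by at least $\pc{(a-1)}{s}-\pc{(b-1)}{s}$. The only cosmetic differences are that you telescope over single-relay increments and phrase the rank-increment step in terms of columns, whereas the paper compares $\inc{[b:\N]}{\{j\}}$ and $\inc{[a:\N]}{\{j\}}$ in one shot using the rows of the bottom block $\bD^{\eta-\pc{(a-1)}{s}}$; both versions are valid.
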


\begin{lemma} \label{singcond}
If ${\pc{1}{s} \leq \pc{2}{s} \leq \cdots \leq \pc{\N}{s}}$ and there exists some ${j \in [\N]}$ such that $\pc{j}{s}=\pc{(j-1)}{s}=\rinc{[j:n]}{\{j\}}$, then $\det{\bP}=0$.
\end{lemma}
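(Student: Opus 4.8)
The plan is to exhibit two identical columns in the $(\N+2)\times(\N+2)$ matrix $\bP$, which forces $\det{\bP}=0$. The natural candidates are the column indexed by $j$ (the state $\{j\}$) and the column indexed by $\N+1$ (the state $\varnothing$, via the convention $\rinc{\Omega}{\{\N+1\}}=\rinc{\Omega}{\varnothing}$). Since $\PQ{0}{j}=\PQ{0}{\N+1}=1$, it suffices to prove that $\rinc{[i:\N]}{\{j\}}=\rinc{[i:\N]}{\varnothing}$ for every $i\in[\N+1]$.

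First I would evaluate the reference column $\N+1$. Because $[i:\N]\cap\varnothing=\varnothing$, the equality clause of Proposition~\ref{prop:Propfs} gives $\rinc{[i:\N]}{\varnothing}=\max_{k\in[1:i-1]}\pc{k}{s}=\pc{(i-1)}{s}$, using the ordering $\pc{1}{s}\le\dots\le\pc{\N}{s}$ and the convention $\pc{0}{s}=0$. So the target reduces to showing $\rinc{[i:\N]}{\{j\}}=\pc{(i-1)}{s}$ for all $i\in[\N+1]$. For $i>j$ relay $j$ is receiving (as $j\notin[i:\N]$), hence $[i:\N]\cap\{j\}=\varnothing$ and Proposition~\ref{prop:Propfs} again applies with equality over the index set $[1:i-1]\setminus\{j\}$: this yields $\rinc{[i:\N]}{\{j\}}=\pc{(i-1)}{s}$ when $i\ge j+2$, while the boundary case $i=j+1$ produces $\rinc{[j+1:\N]}{\{j\}}=\pc{(j-1)}{s}$. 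This boundary value equals the desired $\pc{j}{s}$ precisely because of the hypothesis $\pc{(j-1)}{s}=\pc{j}{s}$, which is exactly where the equal-consecutive-links assumption is consumed.

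The crux is the range $i\le j$, where relay $j$ is transmitting and sits on the source side of the cut, so Proposition~\ref{prop:Propfs} delivers only the lower bound $\rinc{[i:\N]}{\{j\}}\ge \pc{(i-1)}{s}+\pc{d}{j}\ge \pc{(i-1)}{s}$. To obtain a matching upper bound I would invoke Lemma~\ref{prop:extra} with state $\{j\}$, $a=j$, and $b=i$ (admissible since $i<j\le\N$): it gives $\rinc{[i:\N]}{\{j\}}-\pc{(i-1)}{s}\le \rinc{[j:\N]}{\{j\}}-\pc{(j-1)}{s}$, whose right-hand side vanishes by the third part of the hypothesis, $\rinc{[j:\N]}{\{j\}}=\pc{(j-1)}{s}$. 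The two bounds squeeze $\rinc{[i:\N]}{\{j\}}=\pc{(i-1)}{s}$ for every $i\le j$ (with $i=j$ being the hypothesis itself). Putting the two ranges together shows that columns $j$ and $\N+1$ of $\bP$ coincide entrywise, so $\det{\bP}=0$. I expect the main obstacle to be this transmitting range: the rank $\rinc{[i:\N]}{\{j\}}$ admits no closed form there, and only the interplay of the generic lower bound with the monotonicity of Lemma~\ref{prop:extra}—activated by the degenerate hypothesis that collapses $\rinc{[j:\N]}{\{j\}}$ down to $\pc{(j-1)}{s}$—pins it down.
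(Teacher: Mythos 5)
Your proof is correct, and there is no circularity in it: Lemma~\ref{prop:extra} is proved in the paper independently of Lemma~\ref{singcond}, so you may legitimately use it as a black box. Your overall skeleton coincides with the paper's---show that columns $j$ and $\N+1$ of $\bP$ agree entrywise, hence $\det{\bP}=0$---and your treatment of the reference column $\N+1$ and of the rows $i\ge j+1$ (equality clause of Proposition~\ref{prop:Propfs}, with the hypothesis $\pc{j}{s}=\pc{(j-1)}{s}$ absorbing the boundary row $i=j+1$) is exactly the paper's. Where you genuinely depart is the crux range $i<j$. The paper argues directly on the transfer matrix: since the hypothesis forces $\rnk\bigl(\inc{[j:\N]}{\{j\}}\bigr)=\pc{(j-1)}{s}$, which is precisely the rank of its first (source) column block, every column of the second (relay-$j$) block lies in the span of the first block; this linear dependence among columns is inherited by every row-restricted submatrix $\inc{[i:\N]}{\{j\}}$, $i\in[j]$, whose rank therefore collapses to $\pc{(i-1)}{s}$. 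You instead squeeze the scalar quantity: Proposition~\ref{prop:Propfs} gives the lower bound $\rinc{[i:\N]}{\{j\}}\ge\pc{(i-1)}{s}+\pc{d}{j}\ge\pc{(i-1)}{s}$, while Lemma~\ref{prop:extra} with $a=j$, $b=i$ gives the matching upper bound $\rinc{[i:\N]}{\{j\}}\le\pc{(i-1)}{s}+\bigl(\rinc{[j:\N]}{\{j\}}-\pc{(j-1)}{s}\bigr)=\pc{(i-1)}{s}$. Your route is shorter and stays entirely at the level of rank inequalities already established, avoiding any further analysis of the shift-matrix structure; the paper's route is self-contained at the matrix level and makes the underlying mechanism explicit (the relay-$j$ columns are spanned by the source columns, which in particular forces $\pc{d}{j}=0$ under the hypothesis---a fact your squeeze recovers only implicitly, since your two bounds are compatible only when $\pc{d}{j}$ vanishes). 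Both arguments are sound.
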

The {proof of} Lemma~\ref{prop:extra} and Lemma~\ref{singcond} are presented in~Appendix~\ref{extraproof} and in~Appendix~\ref{singcondproof}, respectively.

We now proceed to show that the $\lambda_{\{i\}}^\star$'s  obtained from~\eqref{lambdarec} are non-negative for all $i \in [\N]$.  The proof is based on contradiction. We start by assuming that the claim is wrong. Let 
 $\mathcal{K}= \left \{i: \lambda_{\{i\}}^\star < 0, i \in [n] \right \}$, and $k$ be the maximum element of $\mathcal{K}$. We also define  $\mathcal{L}= \left \{j\in [1:k-1]:\pc{j}{s}=\pc{(j-1)}{s} \right \}$, and $\ell =\max \mathcal{L} \cup \{0\}$. 
 
Our goal is to show that $\mathcal{K}$ is an empty set and hence, $\lambda_{\{i\}}^\star \geq 0$ for all $i \in [\N]$. 
We first use a {{\em backward induction}} to prove that for every $i\in[\ell+1:k]$ we have $\lambda_{\{i\}}^\star \leq 0$. Note that for the base case of $i=k$, the assumption of $k\in \mathcal{K}$ {implies that} $\lambda_{\{k\}}^\star < 0$.  Now, consider some {$i\in [\ell+1:k-1]$} and assume ${\lambda_{\{i+1\}}^\star, \ldots, \lambda_{\{k\}}^\star \leq 0 }$. Our goal is to show {that} $\lambda_{\{i\}}^\star \leq 0$.
Note that for $\cS=\{j\}$ and $\Omega=[i+1:\N]$ with ${j\in[i+1:\N]}$ we have $\cS^c\cap \Omega^c = [i]$ and hence,  Proposition~\ref{prop:Propfs} implies that ${\rinc{[i+1:\N]}{\{j\}}\geq \max_{x\in [i]}\pc{x}{s} =  \pc{i}{s}}$. Therefore, the coefficients of the form $\left({\rinc{[i+1:\N]}{\{j\}} - \pc{i}{s}}\right)$ in~\eqref{lambdarec} are non-negative. {Then,} starting from~\eqref{lambdarec}, we can write 
\begin{align}\label{eq:lambda-i-vs-k}
    \lambda_{\{i\}}^\star & \left (\pc{i}{s} -\pc{(i-1)}{s} \right )
     \hspace{-1pt}\nonumber\\
     &=\hspace{-1pt} \left (\pc{i}{s}-t^\star \right )+\hspace{-2pt}\sum_{j=i+1}^\N \lambda_{\{j\}}^\star \hspace{-2pt}\left(\rinc{[i+1:\N]}{\{j\}}\hspace{-1pt}-\hspace{-1pt}\pc{i}{s}\hspace{-1pt}\right)\nonumber\\
     &= 
     \left (\pc{i}{s}- t^\star \right )+
     \sum_{j=i+1}^k \lambda_{\{j\}}^\star \hspace{-3pt}\left(\rinc{[i+1:\N]}{\{j\}}\hspace{-1pt}-\hspace{-1pt}\pc{i}{s}\hspace{-1pt}\right)\hspace{-2pt}\nonumber\\
     & \hspace{35mm} +
     \sum_{j=k+1}^\N  \lambda_{\{j\}}^\star \left(\rinc{[i+1:\N]}{\{j\}}-\pc{i}{s}\hspace{-1pt}\right)\nonumber\\
     &\stackrel{{\rm{(a)}}}{\leq}
     \left (\pc{i}{s}-t^\star \right )
     +
     \sum_{j=k+1}^\N \lambda_{\{j\}}^\star \hspace{-2pt}\left(\rinc{[i+1:\N]}{\{j\}}\hspace{-1pt}-\hspace{-1pt}\pc{i}{s}\hspace{-1pt}\right)\nonumber\\
     &\stackrel{{\rm{(b)}}}{\leq}
     \left (\pc{k}{s}-t^\star \right )
     +
     \sum_{j=k+1}^\N \lambda_{\{j\}}^\star \hspace{-2pt}\left(\rinc{[k+1:\N]}{\{j\}}\hspace{-1pt}-\hspace{-1pt}\pc{k}{s}\hspace{-1pt}\right)\nonumber\\
     &= \lambda_{\{k\}}^\star \hspace{-1pt}\left (\pc{k}{s}\hspace{-1pt}-\hspace{-1pt}\pc{(k-1)}{s} \right )\stackrel{{\rm{(c)}}}{\leq} 0,
\end{align}
where $\rm{(a)}$ holds by the induction assumption that $\left \{\lambda_{\{i+1\}}^\star, \ldots, \lambda_{\{k\}}^\star \right \}$ are all non-positive, and $\rm{(b)}$ follows from the fact that  $\pc{i}{s} \leq \pc{k}{s}$ for $i\leq k$, and {applying Lemma~\ref{prop:extra}} for $i<k<j$. 
Finally, since $i>\ell$ we have $\pc{i}{s} -\pc{(i-1)}{s}>0 $, which together with~\eqref{eq:lambda-i-vs-k} implies $\lambda_{\{i\}}^\star \leq 0$. 

Now, let us consider~\eqref{eq:lambda-i-vs-k} for $i=\ell=\max \mathcal{L} \cup \{0\}$. Note that if $\ell\in \mathcal{L}$ we have $\pc{\ell}{s}-\pc{(\ell-1)}{s}=0$, and if $\ell=0$, we get $\lambda_{\{ \ell \}}^\star=0$. This implies that the left-hand-side of~\eqref{eq:lambda-i-vs-k} equals zero. Then, the chain of inequalities in~\eqref{eq:lambda-i-vs-k} is feasible if and only if the three inequalities labeled by $\rm{(a)}$, $\rm{(b)}$, and $\rm{(c)}$ hold with equality.  From $\rm{(b)}$ we can conclude $\pc{\ell}{s} = \pc{k}{s}$, and since $\pc{\cdot}{s}$'s are sorted in an increasing order, we have 
\begin{align}\label{eq:Lem2-1}
    \pc{\ell}{s}=\pc{(\ell+1)}{s}=\ldots = \pc{(k-1)}{s} =\pc{k}{s}. 
\end{align}
Thus, given {the facts that} ${\pc{\ell}{s}=\pc{(\ell+1)}{s}}$ {and} $\ell = \max \mathcal{L}\cup\{0\}$, we can conclude that $\ell=k-1$ (otherwise $\ell+1$ also belongs to $\mathcal{L}$ and hence, $\ell$ cannot be the maximum element of $\mathcal{L} \cup \{0\}$). 

Now, {for $\rm{(a)}$ to hold with equality,} we can conclude that the first summation is zero. However, since {$\lambda_{\{i\}}^\star\leq 0$} for $i\in [\ell+1:k]$ and $\rinc{[i+1:\N]}{\{j\}} -\pc{i}{s} \geq 0$, each term in the summation should be zero. In particular, for the term corresponding to $j=k$, since $\lambda_{\{k\}}^\star<0$, we get 
\begin{align}\label{eq:Lem2-2}
 \pc{\ell}{s} = \rinc{[\ell+1:\N]}{\{k\}} = \rinc{[\ell+1:\N]}{\{\ell+1\}},
\end{align}
where the {second equality} holds since $\ell+1=k$. 
Therefore, from~\eqref{eq:Lem2-1} and \eqref{eq:Lem2-2} we can conclude that the conditions of Lemma~\ref{singcond} are satisfied for $j=\ell+1$, and thus, Lemma~\ref{singcond} implies that
$\text{det}(\bP)=0$. This last conclusion is in contradiction with the assumption of Proposition~\ref{prop:OptUB}.  
{In other words, in order to have $\text{det}(\bP)\neq0$ we need $\mathcal{K}$ to be an empty set and hence, $\lambda_{\{i\}}^\star \geq 0$ for all $i \in [\N]$.}
This completes the proof of {Proposition~\ref{prop:OptUB:1}.}

\section{Proof of Proposition~\ref{prop:mu}}
\label{app:nonnegativelvar}

 We define ${\mathcal{L}= \left \{i \in [\N]:\pc{i}{s}=\pc{(i-1)}{s} \right \}}$ and we let ${\ell= \max \mathcal{L} \cup \{0\}}$. The proof of {Proposition~\ref{prop:mu}} consists of three parts. First, we show that $\mu_j=0$ for all ${j \in [\ell]}$. Then, we prove that all non-zero $\mu_i$'s have the same sign for ${i\in[\ell+1:\N+1]}$. This together with the fact that ${\sum_{i=1}^{\N+1}\mu_i=1}$ guarantees $\mu_i\geq 0$ for all $i \in [\N+1]$. Finally, $\mu_p\geq 0$ is immediately implied from $\mu_p=\sum_{j=1}^{\N+1} \rinc{[j:\N]}{\{i\}} \mu_j$. 

Recall from~\eqref{eq:mu} that  $\mu_i$'s for ${i \in [\N+1]  \cup \{p\}}$ can be obtained by solving the  system of linear equations
\begin{align} \label{eq:mu_transpose}
      \bm{\mu}\pq{}  \!=\! 
\begin{bmatrix} 1  & 0 & \ldots & 0 & 0 \end{bmatrix} \!\!.
\end{align}
Note that for any $k\in[n]$, the $k$th column of the identity in~\eqref{eq:mu_transpose}, is given by 
\begin{align}\label{eq:mu-col-k-1}
    \bm{\mu} \bP_{[0:\N+1],k} =\mu_p - \sum_{j=1}^{\N+1} \rinc{[j:\N]}{\{k\}} \mu_j =0.
\end{align}
Similarly, rewriting the equation for column $n+1$, we get 
\begin{align}\label{eq:mu-col-n+1}
    \bm{\mu} \bP_{[0:\N+1],\N+1}=\mu_p - \sum_{j=1}^{\N+1} \rinc{[j:\N]}{\varnothing} \mu_j =0.
\end{align}
Subtracting~\eqref{eq:mu-col-k-1} from~\eqref{eq:mu-col-n+1}, we get 
\begin{align}\label{eq:Recmu}
0&=   \bm{\mu} \left(\bP_{[0:\N+1],\N+1} - \bP_{[0:\N+1],k}\right) \nonumber\\
&= \sum_{j=1}^{\N+1} \left( \rinc{[j:\N]}{\{k\}} - \rinc{[j:\N]}{\varnothing} \right) \mu_j\nonumber\\
&= \sum_{j=1}^{k} \left (\rinc{[j:\N]}{\{k\}} - \rinc{[j:\N]}{\varnothing} \right ) \mu_j+ \left (\rinc{[k+1:\N]}{\{k\}}-\rinc{[k+1:\N]}{\varnothing} \right)\mu_{k+1} \nonumber\\
        &\quad+ \sum_{j=k+2}^{\N+1} \left (\rinc{[j:\N]}{\{k\}} - \rinc{[j:\N]}{\varnothing} \right)  \mu_j \nonumber\\
        &\stackrel{\rm{(a)}}{=}   \sum_{j=1}^{k} \left (\rinc{[j:\N]}{\{k\}} \!-\! \pc{(j-1)}{s} \right ) \mu_j  \!-\! \left (\pc{k}{s}\!-\!\pc{(k-1)}{s} \right)\mu_{k+1},
        \end{align}
where in $\rm{(a)}$ we used Proposition~\ref{prop:Propfs} to get ${\rinc{[j:\N]}{\varnothing} = \pc{(j-1)}{s}}$, ${\rinc{[k+1:\N]}{\{k\}} = \pc{(k-1)}{s}}$,  and ${\rinc{[j:\N]}{\{k\}} = \pc{(j-1)}{s}}$ for $j\geq k+2$. 

Note that the equation in~\eqref{eq:Recmu} holds for every $k\in[n]$. Moreover, if $k<\ell$, such an equation only involves variables $\{\mu_1,\ldots, \mu_{\ell}\}$. Lastly, for $k=\ell$, we have  $\pc{\ell}{s}\!-\!\pc{(\ell-1)}{s}$, and hence the coefficient of $\mu_{\ell+1}$ will be zero. Thus, equation~\eqref{eq:Recmu} for $k=\ell$ reduces to
\begin{align}\label{eq:Recmu:ell}
\sum_{j=1}^{\ell} \left (\rinc{[j:\N]}{\{\ell\}} - \pc{(j-1)}{s} \right ) \mu_j =0.
\end{align}
This together with the same equation for $k\in[\ell-1]$ {provides} us with a total of $\ell$ equations in $\ell$ variables, namely $\{\mu_1,\ldots, \mu_{\ell}\}$.

Let $\bQ$ be an $(\N+2)\times \ell$ matrix where its $k$th column\footnote{Recall that matrix columns and rows are indexed beginning from $0$.} is given by 
{$\bP_{[0:n+1],n+1} - \bP_{[0:n+1],k+1}$}. 
Note that $\bQ$ is obtained by elementary column {operations} on $\bP$, and since $\bP$ is full-rank, so is $\bQ$, i.e., $\rnk(\bQ)=\ell$. Moreover,~\eqref{eq:Recmu} and~\eqref{eq:Recmu:ell} imply that the $j$th row of $\bQ$ is zero, for $j\in\{0,\ell+1,\ell+2, \ldots, \N+1\}$. Hence, the remaining $\ell$ rows should be linearly independent, which implies {that $\bQ_{[\ell],[0:\ell-1]}$} is full rank. Therefore, the unique solution {for} the system of equations obtained from~\eqref{eq:Recmu} and~\eqref{eq:Recmu:ell}, i.e.,  
\begin{align}
    \begin{bmatrix}
    \mu_1 & \ldots & \mu_\ell
    \end{bmatrix}
    {\bQ_{[\ell],[0:\ell-1]}} = {\mathbf{0}_{1 \times \ell}},
\end{align}
is $\begin{bmatrix}
    \mu_1 & \ldots & \mu_\ell
    \end{bmatrix}={\mathbf{0}_{1 \times \ell}}$.

Next, {we use} induction to show that all non-zero $\mu_{i}$'s have the same sign, for all ${i \in [\ell+1:\N+1]}$. First note that~\eqref{eq:Recmu} for $k=\ell+1$ together with the fact that $\mu_i=0$ for $i\in[\ell]$ implies {that}
\begin{align*}
    0&=\sum_{j=1}^{\ell+1}\left (\rinc{[j:\N]}{\{\ell+1\}} \!-\! \pc{(j-1)}{s} \right ) \mu_j  \!-\! \left (\pc{(\ell+1)}{s}\!-\!\pc{\ell}{s} \right)\mu_{\ell+2}\nonumber\\
    &=\left (\rinc{[\ell+1:\N]}{\{\ell+1\}} \!-\! \pc{\ell}{s} \right ) \mu_{\ell+1}  \!-\! \left (\pc{(\ell+1)}{s}\!-\!\pc{\ell}{s} \right)\mu_{\ell+2}.
\end{align*}
Note that ${\pc{(\ell+1)}{s}\!-\!\pc{\ell}{s}}>0$ since $\ell$ is the maximum element of $\mathcal{L}\cup\{0\}$ and the left side link capacities are arranged in increasing order. Moreover, Proposition~\ref{prop:Propfs} implies that ${\left (\rinc{[\ell+1:\N]}{\{\ell+1\}} \!-\! \pc{\ell}{s} \right ) \geq 0}$. Therefore, we either have $\mu_{\ell+2}=0$, or ${\sgn(\mu_{\ell+2})=\sgn(\mu_{\ell+1})}$. This establishes the base case of the induction. Now, assume {that} our claim holds for every $j\leq k>\ell$, i.e., {all} non-zero $\mu_j$'s have the same sign for $j\leq k$. Then, from~\eqref{eq:Recmu} we have 
\begin{align*}
    \left (\pc{k}{s}-\pc{(k-1)}{s} \right)\mu_{k+1} =\sum_{j=1}^{k} \left (\rinc{[j:\N]}{\{k\}} - \pc{(j-1)}{s} \right ) \mu_j. 
\end{align*}
Similar to the base case, we note that ${\pc{k}{s}-\pc{(k-1)}{s} >0}$, and ${\rinc{[j:\N]}{\{k\}} - \pc{(j-1)}{s}\geq 0}$. 
{Therefore, $\mu_{k+1}$ is either zero or its sign is identical to the one of the non-zero $\mu_j$'s with $j \in [\ell+1:k]$}
This completes the induction, from which we can conclude that 
{all non-zero $\mu_i$'s have the same sign for ${i\in[\ell+1:\N+1]}$.}
Finally, since $\sum_{k=1}^{\N+1}\mu_k=1$, this common sign has to be positive. 

Lastly, from $\mu_p=\sum_{j=1}^{\N+1} \rinc{[j:\N]}{\{i\}} \mu_j$ for all $i \in [\N+1]$, it directly follows that $\mu_p \geq 0$.
This concludes the proof of {Proposition~\ref{prop:mu}.}

\section{Proof of Lemma~\ref{prop:extra}} \label{extraproof}
Recall that $\inc{[a:\N]}{\{j\}}$ for  $j\in [a:\N]$ is the transfer matrix from $X_{\{s,j\}}$ to $Y_{\{d,1,2,\dots, a-1\}}$, and {it is given by}
\begin{align*}
\inc{[a:n]}{\{j\}}
=
  \left[ 
\begin{array}{c|c} 
  {\mathbf{0}_{\eta \times \eta}} & \mathbf{D}^{\eta-\pc{d}{j}} \\
  \hline 
  \mathbf{D}^{\eta-\pc{1}{s}} & \mathbf{D}^{\eta-\pc{1}{j}} \\
  \hline 
  \mathbf{D}^{\eta-\pc{2}{s}} & \mathbf{D}^{\eta-\pc{2}{j}} \\
  \hline 
  \vdots& \vdots \\
  \hline 
  \mathbf{D}^{\eta-\pc{(a-1)}{s}} & \mathbf{D}^{\eta-\pc{(a-1)}{j}}
\end{array} 
\right].
\end{align*}
Similarly,  {for  $1 \leq b < a$,} the matrix $\inc{[b:\N]}{\{j\}}$ is given by the top $b$ block rows of $\inc{[a:\N]}{\{j\}}$. 
Since ${\pc{1}{s} \leq \pc{2}{s} \leq \cdots \leq \pc{\N}{s}}$, from the definition of $\mathbf{D}^{\eta-m}$ in~\eqref{eq:DiamLDA} and recalling that we index rows and columns of a matrix starting from zero, it follows that columns ${[\pc{(b-1)}{s}: \pc{}{}-1]}$  in $\inc{[b:\N]}{\{j\}}$ are zero, {where $\pc{0}{s}=0.$}

Now, consider the {lowest} left block of $\inc{[a:\N]}{\{j\}}$, namely $\mathbf{D}^{\eta-\pc{(a-1)}{s}}$. 
{
From~\eqref{eq:DiamLDA}, for every $\ell \in { [\pc{(b-1)}{s}:\pc{(a-1)}{s}-1]}$, the row ${\pc{}{}-\pc{(a-1)}{s}+\ell}$ of the matrix $\mathbf{D}^{\eta-\pc{(a-1)}{s}}$ has a one in column $\ell$ and zero elsewhere. 
Since $\inc{[b:\N]}{\{j\}}$ is fully zero in these columns, the row ${\pc{}{}-\pc{(a-1)}{s}+\ell}$ of the matrix $\inc{[a:\N]}{\{j\}}$ is linearly independent from all rows in $\inc{[b:\N]}{\{j\}}$. 
Thus,} $\inc{[a:\N]}{\{j\}}$ has at least ${\pc{(a-1)}{s}-\pc{(b-1)}{s}}$ additional  linearly independent rows compared to $\inc{[b:\N]}{\{j\}}$, which immediately implies {that}
${\rinc{[a:n]}{\{j\}} \geq \rinc{[b:\N]}{\{j\}}+\left (\pc{(a-1)}{s}-\pc{(b-1)}{s} \right )}$. This concludes the proof of Lemma~\ref{prop:extra}.

\section{Proof of Lemma~\ref{singcond}} \label{singcondproof}
The outline of the proof is as follows: we show that, if ${\pc{1}{s} \leq \pc{2}{s} \leq \cdots \leq \pc{\N}{s}}$ and {$\pc{j}{s}=\pc{(j-1)}{s}=\rinc{[j:n]}{\{j\}}$} for some $j\in[n]$,  then columns $j$ and $\N+1$ of the matrix $\bP$ are identical, i.e., $\bP$ is singular and hence, $\det{\bP}=0$.
Towards this end, we start by noting that
${\PQ{i}{j} = -\rinc{[i:\N]}{\{j\}}}$ for ${(i,j)\in[n+1]\times [\N+1]}$  and the column $j$ of the matrix $\bP$ is given by 
\begin{align*}
    \bP_{[0:\N+1],j}\!=\!\begin{bmatrix}
    {1} &\!\! -\rinc{[1:\N]}{\{j\}} &\! -\rinc{[2:\N]}{\{j\}} &\!\! \cdots &\!\! -\rinc{[\N:\N]}{\{j\}} &\! -\rinc{[\N+1:\N]}{\{j\}}
    \end{bmatrix}^T\!\!. 
\end{align*}
Now, we evaluate each entry of the vector $\bP_{[0:\N+1],j}$. Consider some $i \in [j+2:\N+1]$. Using Proposition~\ref{prop:Propfs} for $\cS=\{j\}$ and $\Omega=[i:\N]$ with $\Omega \cap \cS= \varnothing$ we have
\begin{align}\label{eq:lm2:1}
{\rinc{[i:\N]}{\{j\}}=\max_{t\in [1:i-1]\setminus\{j\}} \pc{t}{s} = \pc{(i-1)}{s}}.
\end{align} 
Similarly,  for $i=j+1$  using Proposition~\ref{prop:Propfs} we get
\begin{align}\label{eq:lm2:2}
{\rinc{[i:\N]}{\{j\}}= \max_{t\in [1:i-1]\setminus\{j\}} \pc{t}{s} =\pc{(j-1)}{s} \stackrel{{\rm{(a)}}}{=}  \pc{j}{s} =\pc{(i-1)}{s}},
\end{align}
where the equality in \rm{(a)} follows from the assumption of the lemma. 
It remains to evaluate $\rinc{[i:n]}{\{j\}}$ for $i\in [j]$. 
Consider $\inc{[j:n]}{\{j\}}$ which is defined as
\begin{align*}
\inc{[j:n]}{\{j\}}
=
  \left[ 
\begin{array}{c|c} 
  {\mathbf{0}_{\eta \times \eta}} & \mathbf{D}^{\eta-\pc{d}{j}} \\
  \hline 
  \mathbf{D}^{\eta-\pc{1}{s}} & \mathbf{D}^{\eta-\pc{1}{j}} \\
  \hline 
  \mathbf{D}^{\eta-\pc{2}{s}} & \mathbf{D}^{\eta-\pc{2}{j}} \\
  \hline 
  \vdots& \vdots \\
  \hline 
  \mathbf{D}^{\eta-\pc{(j-2)}{s}} & \mathbf{D}^{\eta-\pc{(j-2)}{j}}\\
  \hline 
  \mathbf{D}^{\eta-\pc{(j-1)}{s}} & \mathbf{D}^{\eta-\pc{(j-1)}{j}}
\end{array} 
\right],
\end{align*}
where $\mathbf{D}^{\eta-m}$ is given in~\eqref{eq:DiamLDA}. Focusing on the first column-block of $\inc{[j:n]}{\{j\}}$,  i.e., 
\[
\left[ 
\begin{array}{c|c|c|c} 
  \!\!\mathbf{0}_{\eta \times \eta} \!&\! \mathbf{D}^{\eta-\pc{1}{s}} \!&\!
  \dots \!&\! \mathbf{D}^{\eta-\pc{(j-1)}{s}}\!\!\!\!
\end{array}
\right]^T,
\]
we observe that each row in this $\eta \times j\eta$ matrix is either zero or appearing in its lowest block, $\mathbf{D}^{\eta-\pc{(j-1)}{s}}$. Hence, we have 
\begin{align*}
    \rnk&\!\left(\!\left[ 
\begin{array}{c|c|c|c} 
  \!\!\mathbf{0}_{\eta \times \eta} \!\!&\!\! \mathbf{D}^{\eta-\pc{1}{s}} \!\!&\!\!
  \dots \!\!&\!\! \mathbf{D}^{\eta-\pc{(j-1)}{s}}\!\!\!\!
\end{array}
\right]^T\right) \!=\! \rnk \!\left(\mathbf{D}^{\eta-\pc{(j-1)}{s}}\right)\\ 
&= \pc{(j-1)}{s} \stackrel{{\rm{(a)}}}{=} \rinc{[j:n]}{\{j\}} \stackrel{{\rm{(b)}}}{=}\rnk \left(\inc{[j:n]}{\{j\}}\right),
\end{align*}
where in~$\rm{(a)}$ we used  the assumption of the lemma, and~$\rm{(b)}$ follows from~\eqref{eq:def:pc}. In other words, the (column)-rank of $\inc{[j:n]}{\{j\}}$ equals the (column)-rank of its first block column, or equivalently, every 
 column in the second column block of $\inc{[j:n]}{\{j\}}$  can be written as a linear combination of the columns in the first column block of $\inc{[j:n]}{\{j\}}$. Next, note that for every $i\in[j]$ the matrix 
\begin{align*}
\inc{[i:n]}{\{j\}}
&=
  \left[ 
\begin{array}{c|c} 
  {\mathbf{0}_{\eta \times \eta}} & \mathbf{D}^{\eta-\pc{d}{j}} \\
  \hline 
  \mathbf{D}^{\eta-\pc{1}{s}} & \mathbf{D}^{\eta-\pc{1}{j}} \\
  \hline 
  \mathbf{D}^{\eta-\pc{2}{s}} & \mathbf{D}^{\eta-\pc{2}{j}} \\
  \hline 
  \vdots& \vdots \\
  \hline 
  \mathbf{D}^{\eta-\pc{(i-2)}{s}} & \mathbf{D}^{\eta-\pc{(i-2)}{j}}\\
  \hline 
  \mathbf{D}^{\eta-\pc{(i-1)}{s}} & \mathbf{D}^{\eta-\pc{(i-1)}{j}}
\end{array} 
\right]
\end{align*}
is a sub-matrix of $\inc{[j:n]}{\{j\}}$ and hence, the same conclusion holds for $\inc{[i:n]}{\{j\}}$, that is,  each column in the second column block of $\inc{[i:n]}{\{j\}}$ is also a linear combination of the columns in the first column block of $\inc{[i:n]}{\{j\}}$. Therefore, the rank of $\inc{[i:n]}{\{j\}}$ equals the rank of its first column block, which leads to 
\begin{align}\label{eq:lm2:3}
\rinc{[i:n]}{\{j\}}  &= \text{rank}\left(
\inc{[i:n]}{\{j\}}\right) \nonumber\\
&= 
\rnk\left(\left[ 
\begin{array}{c|c|c|c} 
  \!\!\mathbf{0}_{\eta \times \eta} \!&\! \mathbf{D}^{\eta-\pc{1}{s}} \!&\!
  \dots \!&\! \mathbf{D}^{\eta-\pc{(i-1)}{s}}\!\!\!\!
\end{array}
\right]^T\right) \nonumber\\
&=\max_{y\in [i-1]} \text{rank}\left(\mathbf{D}^{\eta-\pc{y}{s}}\right) \nonumber\\
& \stackrel{{\rm{(a)}}}{=} \max_{y\in[i-1]} \pc{y}{s} \stackrel{{\rm{(b)}}}{=} \pc{(i-1)}{s},
\end{align}
where 
$\rm{(a)}$ is due to the fact that  the rank of  $\mathbf{D}^{\eta-m}$ equals $m$, and $\rm{(b)}$ follows since ${\pc{1}{s} \leq \pc{2}{s} \leq \cdots \leq \pc{\N}{s}}$.
Therefore, using~\eqref{eq:lm2:1}--\eqref{eq:lm2:3} the entries of $\bP
_{[0:\N+1],j}$ can be evaluated as
\begin{align*}
    \bP
_{[0:\N+1],j}\!=\!\begin{bmatrix}
    {1} \!\!&\!\! {-\pc{0}{s}}  \!&\! -\pc{1}{s} \!&\! -\pc{2}{s} \!&\!\! \cdots \!\!&\! -\pc{(\N-1)}{s} & -\pc{\N}{s}
    \end{bmatrix}^T\!\!,
\end{align*}
{where  $\pc{0}{s} = 0$.}

Finally,  using~\eqref{eq:MatrP} for $j=\N+1$ and Proposition~\ref{prop:Propfs} for $\cS=[\N+1:\N]=\varnothing$ and $\Omega=[i:\N]$ with ${\Omega \cap \cS =\varnothing}$ we get 
\begin{align*}
    \PQ{i}{n+1} = -\rinc{[i:n]}{\{n+1\}} = -\max_{t\in [1:i+1]} \pc{(i-1)}{s} \pc{t}{s} = 
    - \pc{(i-1)}{s}. 
\end{align*}
Therefore, the $(\N+1)$th column of $\PQ{}{}$ is identical to its $j$th column. This concludes the proof of Lemma~\ref{singcond}.


\begingroup
\let\cleardoublepage\clearpage
 \bibliography{BibSJ1.bib}
 \bibliographystyle{IEEEtran}
\endgroup

\end{document}